\title{Constrained and Robust Policy Synthesis with Satisfiability-Modulo-Probabilistic-Model-Checking}
\author{
    Linus Heck\textsuperscript{\rm 1},
    Filip Macák\textsuperscript{\rm 2},
    Milan Češka\textsuperscript{\rm 2},
    Sebastian Junges\textsuperscript{\rm 1}
}
\definecolor{bluegray}{rgb}{0.5, 0.65, 0.85}
\definecolor{pastelgreen}{HTML}{58bc82}
\definecolor{greengray}{rgb}{0.5, 0.85, 0.65}
\definecolor{sandstone}{rgb}{0.8, 0.6, 0.2}
\definecolor{ruby}{HTML}{D81E5B}
\definecolor{teal}{HTML}{0E6C6B}
\definecolor{cyan}{HTML}{08B2E3}
\definecolor{purple}{HTML}{52489C}
\definecolor{scblue}{HTML}{1F73A9}
\definecolor{scgray}{HTML}{656A6D}
\definecolor{scred}{HTML}{DA2C38} %
\definecolor{color1}{RGB}{0, 100, 200}
\definecolor{color2}{RGB}{0, 150, 0}
\definecolor{color3}{RGB}{230, 100, 0}
\tikzset{every state/.style={inner sep=0pt, minimum size=12pt}}
\tikzset{pMC/.style={node distance=0.8cm, on grid, auto, initial text=,every label/.style={}, font=\scriptsize}}
\newcounter{goodsmileys}
\newcounter{badsmileys}
\renewcommand{\Smiley}[2]{%
	\begin{tikzpicture}[-,scale=#2]
	\newcommand*{\SmileyRadius}{1.0}%
	\pgfmathsetmacro{\eyeX}{0.5*\SmileyRadius*cos(30)}
	\pgfmathsetmacro{\eyeY}{0.3*\SmileyRadius*sin(30)}
	\draw [line width=0.28mm] (\eyeX-0.25,\eyeY) -- (\eyeX-0.25,\eyeY+0.4);
	\draw [line width=0.28mm] (-\eyeX+0.25,\eyeY) -- (-\eyeX+0.25,\eyeY+0.4);
	\pgfmathsetmacro{\xScale}{2*\eyeX/180}
	\pgfmathsetmacro{\yScale}{1.0*\eyeY}
	\draw[line width=0.28mm, domain=-\eyeX:\eyeX]
	plot ({\x},{
		-0.1+#1*0.15 %
		-#1*1.75*\yScale*(sin((\x+\eyeX)/\xScale))-\eyeY});
	\end{tikzpicture}%
}%
\newcommand{\good}{%
  \stepcounter{goodsmileys}%
  \Smiley{0.6}{0.22}\xspace%
}
\newcommand{\cmark}{\ding{51}}%
\newcommand{\xmark}{\ding{55}}%
\newcommand{\yes}{\textcolor{pastelgreen}{\cmark}}
\newcommand{\no}{\textcolor{ruby}{\xmark}}
\newcommand{\dontmatter}{\textcolor{gray!95}{\boldmath$\boldsymbol{\bot}$}}
\newcommand{\viable}{\textsf{viable}}
\newcommand{\blackdontmatter}{{\boldmath$\boldsymbol{\bot}$}}
\newcommand{\pathtostuff}{}
\newcommand{\eq}{\mathbin{=}}
\newcommand{\nneq}{\mathbin{\neq}}
\newcommand{\C}{\mathcal{C}}
\newcommand{\M}{\mathcal{M}}
\newcommand{\D}{\mathcal{D}}
\newcommand{\Ca}{\textbf{C1}}
\newcommand{\Cb}{\textbf{C2}}
\newcommand{\Cc}{\textbf{C3}}
\newcommand{\Cd}{\textbf{C4}}
\newcommand{\conflict}{\(\mathbb{T}_\C\)-conflict}
\newlength\myheight
\newlength\mydepth
\settototalheight\myheight{Xygp}
\newcommand*\inlinegraphics[1]{%
  \settototalheight\myheight{Xygp}%
  \settodepth\mydepth{Xygp}%
  \raisebox{-\mydepth}{\includegraphics[height=\myheight]{#1}}%
}
\pgfplotsset{cycle list/Dark2}
\pgfplotsset{
    compat=1.18,
    log ticks with fixed point, %
    table/col sep=tab, %
    unbounded coords=jump, %
    filter discard warning=false, %
    }
\newtheorem{theorem}{Theorem}
\newtheorem{definition}[theorem]{Definition}
\newtheorem{proposition}[theorem]{Proposition}
\newtheorem*{problem}{Problem Statement}
\newtheorem{rnexample}{Running Example}
\begin{document}

\maketitle

\begin{abstract}

The ability to compute reward-optimal policies for given and known finite Markov decision processes (MDPs) underpins a variety of applications across planning, controller synthesis, and verification. However, we often want  policies  (1)~to be robust, i.e., they perform well on perturbations of the MDP and (2)~to satisfy additional structural constraints regarding, e.g., their representation or implementation cost. 
Computing such robust and constrained policies is indeed computationally more challenging.
This paper contributes the first approach to effectively compute robust policies subject to arbitrary structural constraints using a flexible and efficient framework. We achieve flexibility by allowing to express our constraints in a first-order theory over a set of MDPs, while the root for our efficiency lies in the tight integration of satisfiability solvers to handle  the combinatorial nature of the problem and probabilistic model checking algorithms to handle the analysis of MDPs. 
 Experiments on a few hundred benchmarks demonstrate the feasibility for constrained and robust policy synthesis and the competitiveness with state-of-the-art methods for various fragments of the problem.

\end{abstract}

\section{Introduction}
Markov decision processes (MDPs) are a prominent model for sequential decision making under %
uncertainty. For a given MDP, a standard planning task is to compute a policy that optimizes some value, e.g., a discounted reward over an infinite horizon~\cite{Put94}. Optimal policies for such tasks can be efficiently computed both in theory and in practice. 
However, not every policy that can be computed can be deployed: Policies are often subject to additional constraints, in particular (1)~\emph{structural constraints} that require
a certain shape or structure of the \emph{controller} representing the policy and policies should be (2)~\emph{robust} against perturbations of the environments.
Structural constraints may include constraints that require that a policy can be encoded as a small decision tree, see e.g.,~\cite{DBLP:conf/tacas/AshokJKWWY21,DBLP:conf/ijcai/VosV23} and constraints that enforce that the policy is observation-based, as in partially observable MDPs (POMDPs)~\cite{smallwood1973optimal}. Perturbations can be used to express that a policy achieves sufficient value in a set of MDPs, e.g,~\cite{DBLP:conf/aaai/ChadesCMNSB12,DBLP:journals/infsof/GhezziS13}. The key problem statement we work towards is: Is there a policy, subject to structural constraints, such that on every (PO)MDP (environment) in a set of (PO)MDPs, its value is sufficient? Fig.~\ref{fig:robust} illustrates this problem statement. 

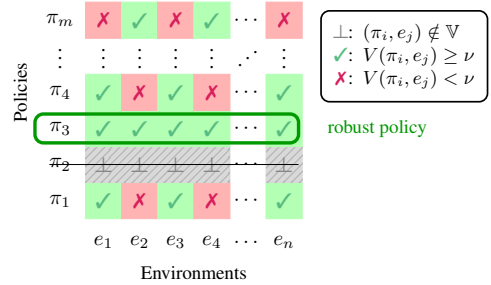
\begin{figure}[t]
    \centering
    \scalebox{0.8}{\begin{tikzpicture}[x=0.6cm, y=0.6cm]

\node at (3.5,-1) {\small Environments};
\node[rotate=0] at (1,-0.15) {$e_1$};
\node[rotate=0] at (2,-0.15) {$e_2$};
\node[rotate=0] at (3,-0.15) {$e_3$};
\node[rotate=0] at (4,-0.15) {$e_4$};
\node at (5,-0.15) {$\dots$};
\node[rotate=0] at (6,-0.15) {$e_n$};

\node[rotate=90] at (-1.3,4) {\small Policies};
\node at (-0.2,1) {$\pi_1$};
\node at (-0.2,2) {$\pi_2$};
\node at (-0.2,3) {$\pi_3$};
\node at (-0.2,4) {$\pi_4$};
\node[anchor=center, yshift=0.2em] at (-0.2,5) {$\vdots$};
\node at (-0.2,6) {$\pi_m$};

\foreach \i/\row in {
  1/{c,x,c,x,x,c},
  2/{o,o,o,o,o,o},
  3/{c,c,c,c,c,c},
  4/{c,x,c,x,c,c},
  5/{c,c,c,x,c,c},
  6/{x,c,x,c,c,x},
}{
  \foreach \j [count=\x from 1] in \row {
    \ifnum\x=5
      \ifnum\i=5
        \node[anchor=center, yshift=0.2em] at (\x,\i) {$\reflectbox{$\ddots$}$};
      \else
        \node at (\x,\i) {$\cdots$};
      \fi
    \else
      \ifnum\i=5
        \node[anchor=center, yshift=0.2em] at (\x,\i) {$\vdots$};
      \else
        \if\j c
            \node[fill=green!30, minimum size=0.601cm, inner sep=0pt, draw=green!30] at (\x,\i) {\yes};
        \else\if\j x
            \node[fill=red!30, minimum size=0.601cm, inner sep=0pt, draw=red!30] at (\x,\i) {\no};
        \else
            \node[fill=gray!30, minimum size=0.601cm, inner sep=0pt, draw=gray!30] at (\x,\i) {};
            \node[
            pattern=north east lines,
            pattern color=gray!60,
            minimum size=0.601cm,
            inner sep=0pt,
            draw=none
            ] at (\x,\i) {\dontmatter};
        \fi\fi
      \fi
    \fi
  }
}

\draw[ultra thick, green!60!black, rounded corners]
  ($(-0.5,3)+(-0.4,-0.4)$) rectangle ($(6,3)+(0.4,0.4)$);

\draw (-0.5,2) -- (6.4,2);

\node[draw, thick, rounded corners, fill=white, anchor=west, align=left, inner sep=6pt] at (7.0,5) {%
  \small \makebox[\widthof{\yes}][c]{\dontmatter}:\hspace{0.5em}$(\pi_i,e_j) \notin \mathbb{V}$ \\
  \small \yes:\hspace{0.5em}$V(\pi_i,e_j) \geq \nu$\\
  \small \makebox[\widthof{\yes}][c]{\no}:\hspace{0.5em}$V(\pi_i,e_j) < \nu$
};

\node[anchor=west, align=left] at (7,3) {\small \textcolor{green!60!black}{robust policy}};

\end{tikzpicture}}
    \caption{
    One axis represents the set of policies $\pi_i$, the other the set of possible environments $e_j$. Some policies (or environments) may not be relevant (\blackdontmatter), e.g. when the policy does not satisfy structural constraints. For all tuples of relevant policy $\pi_i$ and environment $e_j$, the combination performs satisfactory (\cmark) or unsatisfactory (\xmark).
Satisfactory performance is captured by meeting a threshold $\nu$ on the value of the policy on the environment.
The general task is to find a policy $\pi_i$ such that for every environment $e_j$, cell $(i,j)$ is \cmark.
While this figure uses a tabular representation, we use concise symbolic descriptions.}
    \label{fig:robust}
\end{figure}

Accommodating structural constraints and robustness brings two challenges:
(1)~Already when considering \emph{either} structural constraints \emph{or} robustness, we must deal with a \emph{combinatorial} problem.  For various decision problems that underpin planning with additional constraints on the structure of the policy or on the robustness against perturbation, (co)NP-hardness results exist~\cite{andriushchenko2025small, DBLP:conf/tacas/VegtJJ23, DBLP:conf/mfcs/ChatterjeeDH10}. Typically, reasoning about every policy satisfying the constraints or every possible perturbation is necessary \emph{in the worst case}. 
(2)~To support robustness, we must implicitly or explicitly deal with a \emph{quantifier alternation}. The key problem statement existentially quantifies the policy and universally quantifies over the environments.

A classical approach to deal with the combinatorial nature of the problems is to create a mixed integer linear program (MILP), where the integer variables encode the choices for the policies and the perturbations, while an LP encodes the value of the resulting MDP~\cite{DBLP:conf/aips/KumarZ15, DBLP:conf/ijcai/VosV23}. Such monolithic encodings have various drawbacks, in particular, using LP solvers is typically much slower than value- or policy-iteration based approaches to solving MDPs, see e.g.~\cite{DBLP:conf/tacas/HartmannsJQW23}. 

The state-of-the-art, therefore, avoids monolithic encodings 
by applying a separation of concerns~\cite{DBLP:journals/jair/AndriushchenkoCMJK25}.
It instantiates a typical guess-and-check (or inductive synthesis) approach, which we briefly summarize here. 
Consider the black loop in \cref{fig:cegis}: A satisfiability (SAT) solver picks a policy satisfying the structural constraints (or analogously, the environment) and then the policy (and environment) is evaluated by a dedicated numerical routine, in this case, the probabilistic model checking (PMC) framework Storm~\cite{DBLP:journals/sttt/HenselJKQV22}. Without further adaptation, such an approach would need to enumerate all policies. However, Storm can generalize the results from an individual policy to a set of policies~\cite{vcevska2021counterexample}. 
The separation allows using modern SAT solvers for the combinatorial problem and highly efficient routines for policy evaluation. Still, this existing approach has at least two shortcomings. First, generalizing from one policy to a set of policies is often expensive, yet necessary to avoid enumerating the policy space. Second, support for quantifier alternations is not straightforward as the PMC framework would need to evaluate any policy on \emph{every} environment explicitly. For some classes of simple constraints, dedicated branch-and-bound style approaches have been developed with superior performance, e.g., to compute fixed-memory policies for POMDPs~\cite{andriushchenko23search}. 

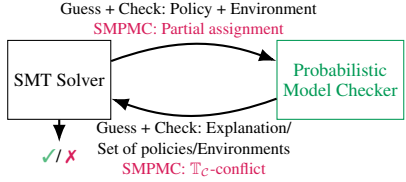
\begin{figure}
    \centering
    \begin{tikzpicture}[
        node distance=2cm and 2cm,
        box/.style={draw, rectangle, minimum width=1.8cm, minimum height=0.5cm, align=center},
        arrow/.style={-{Latex}, thick},
        fitbox/.style={draw, inner sep=0.7cm},
        gplbox/.style={draw=gray, inner sep=1.4cm, dashed},
        every node/.style={scale=0.7}
      ]
        \node[box, style={minimum height=1.5cm}] (smt) {SMT Solver};
        \node[box, style={minimum height=1.5cm}, right=2.2cm of smt, color=ForestGreen] (abstraction) {Probabilistic\\Model Checker};
        \node[above=0.3cm of smt, align=center] (region_r) {};
        \node[above=0.3cm of abstraction, align=center] (big_step) {};
        \node[below=0.3cm of smt] (spec_holds) {\yes / \no};
        \draw[arrow] (smt) -- (spec_holds);
        \draw[arrow, bend left=20] ([yshift=0.25cm] smt.east) to node[above, xshift=-0.1cm, align=center] (arrow1) { \small Guess + Check: Policy + Environment \\ \textcolor{ruby}{\small SMPMC: Partial assignment}} ([yshift=0.25cm] abstraction.west);
        \draw[arrow, bend left=20] ([yshift=-0.25cm] abstraction.west) to node[below, align=center] (arrow2) {\small Guess + Check: Explanation/ \\
         \small Set of policies/Environments\\ \textcolor{ruby}{\small SMPMC: \conflict{}}} ([yshift=-0.25cm] smt.east);
    \end{tikzpicture}
    \caption{A separation of concerns using guess-and-check or counterexample-guided-inductive-synthesis.}
    \label{fig:cegis}
\end{figure}

This paper presents SAT-Modulo-PMC (SMPMC, say \emph{sampick}), a novel synthesis approach that (1) avoids the expensive generalization step crucial for the CEGIS approach, while (2) providing principled support for simultaneous structural constraints and robustness. The approach is (3) sufficiently flexible to support, for the first time, a complete approach that can find provably smallest decision tree encodings of robust policies. 
Technically, the key innovation in SMPMC is a tight integration of modern satisfiability solvers and PMC. More precisely, we consider Z3~\cite{DBLP:conf/tacas/MouraB08}, a satisfiability modulo theories solver with support for quantified logics and the recently developed option to add custom theories~\cite{DBLP:conf/vmcai/BjornerEK23}. We develop a custom theory based on probabilistic model checking. Algorithmically, SMPMC can be seen as a powerful branch-and-bound variation where the SAT solver realizes intelligent branching and the probabilistic model checker ensures effective bounding.  Despite its flexibility, the novel technique is highly competitive and it outperforms a monolithic approach to robust and constrained synthesis by orders of magnitude.

\paragraph{Contributions.}
In summary, we present SMPMC, a novel, versatile, and efficient approach that tightly integrates satisfiability solvers with probabilistic model checking. SMPMC is the first effective approach that constructs decision tree representations of robust policies and is even able to prove the absence of a fixed-size robust decision tree. For various fragments of the robust constrained synthesis problem,
SMPMC is generally at least competitive with ad-hoc state-of-the-art heuristics and solves many benchmarks that the state-of-the-art cannot solve.

\section{Setting and Motivating Example}
\begin{figure}
    \centering
    \begin{subfigure}{0.45\linewidth}
        \centering
        \begin{adjustbox}{height=2cm}
        \begin{tikzpicture}[x=0.8cm,y=0.8cm]

\foreach \i/\row in {
  1/{b,x,y},
  2/{g,o,r},
  3/{g,g,t},
}{
  \foreach \j [count=\x from 1] in \row {
    \if\j r
        \node[fill=yellow!30, minimum size=0.8cm, inner sep=0pt, draw=black] at (\x,\i) {};
    \else\if\j g
        \node[fill=blue!30, minimum size=0.8cm, inner sep=0pt, draw=black] at (\x,\i) {};
    \else\if\j x
        \node[fill=red!30, minimum size=0.8cm, inner sep=0pt, draw=black] at (\x,\i) {};
    \else\if\j y
        \node[fill=ForestGreen!40, minimum size=0.8cm, inner sep=0pt, draw=black] at (\x,\i) {};
    \else\if\j t
        \node[fill=black!10, minimum size=0.8cm, inner sep=0pt, draw=black] at (\x,\i) {
        };
        \node[scale=2] at (\x,\i) {\good};
    \else\if\j b
        \node[fill=yellow!30, minimum size=0.8cm, inner sep=0pt, draw=black] at (\x,\i) {
            
        };
    \begin{scope}[rotate around={-45:(\x,\i)}]
        \draw[] (\x-0.05,\i+0.25) .. controls (\x-0.2,\i+0.4) .. (\x-0.35,\i+0.45);
        \draw[] (\x+0.05,\i+0.25) .. controls (\x+0.2,\i+0.4) .. (\x+0.35,\i+0.45);
        \foreach \y in {-0.1, 0, 0.1} {
          \draw (\x+0.1,\i+\y) -- (\x+0.35,\i+\y+0.05+\y);
          \draw (\x-0.1,\i+\y) -- (\x-0.35,\i+\y+0.05+\y);
        }
        \draw[fill=ruby, draw=ruby] (\x,\i) ellipse (0.2 and 0.3);
        \fill[white] (\x-0.08,\i+0.21) circle (0.04);
        \fill[white] (\x+0.08,\i+0.21) circle (0.04);
        \fill[black] (\x-0.08,\i+0.22) circle (0.02);
        \fill[black] (\x+0.08,\i+0.22) circle (0.02);
    \end{scope}
    \else
        \node[fill=black!10, minimum size=0.8cm, inner sep=0pt, draw=black] at (\x,\i) {
            
        };
\fill[white] (\x,\i) circle (0.4);
\fill[blue!40] (\x,\i) circle (0.3); %
\fill[blue!60] (\x,\i) circle (0.1);
\foreach \angle in {60, 120, 240, 300} {
  \draw[blue!80!black, thick]
    (\x,\i)
    .. controls
    ({\x + 0.3*cos(\angle+15)}, {\i + 0.5*sin(\angle+15)}) and
    ({\x + 0.6*cos(\angle)}, {\i + 0.7*sin(\angle)})
    ..
    ({\x + 0.8*cos(\angle)}, {\i + 0.2*sin(\angle)});
}
    \fi\fi\fi\fi\fi\fi
  }
}
\end{tikzpicture}
        \end{adjustbox}
        \caption{Single initial location}
        \label{fig:bug1}
    \end{subfigure}
    \begin{subfigure}{0.45\linewidth}
        \centering
        \begin{adjustbox}{height=2cm}
        \begin{tikzpicture}[x=0.8cm,y=0.8cm]

\foreach \i/\row in {
  1/{b,x,c},
  2/{g,o,r},
  3/{d,g,t},
}{
  \foreach \j [count=\x from 1] in \row {
    \if\j r
        \node[fill=yellow!30, minimum size=0.8cm, inner sep=0pt, draw=black] at (\x,\i) {};
    \else\if\j g
        \node[fill=blue!30, minimum size=0.8cm, inner sep=0pt, draw=black] at (\x,\i) {};
    \else\if\j t
        \node[fill=black!10, minimum size=0.8cm, inner sep=0pt, draw=black] at (\x,\i) {
        };
        \node[scale=2] at (\x,\i) {\good};
    \else\if\j x
        \node[fill=red!30, minimum size=0.8cm, inner sep=0pt, draw=black] at (\x,\i) {};
    \else\if\j b
        \node[fill=yellow!30, minimum size=0.8cm, inner sep=0pt, draw=black] at (\x,\i) {
        };
          \begin{scope}[rotate around={-45:(\x,\i)}]

        \draw[] (\x-0.05,\i+0.25) .. controls (\x-0.2,\i+0.4) .. (\x-0.35,\i+0.45);
        \draw[] (\x+0.05,\i+0.25) .. controls (\x+0.2,\i+0.4) .. (\x+0.35,\i+0.45);
        \foreach \y in {-0.1, 0, 0.1} {
          \draw (\x+0.1,\i+\y) -- (\x+0.35,\i+\y+0.05+\y);
          \draw (\x-0.1,\i+\y) -- (\x-0.35,\i+\y+0.05+\y);
        }
        \draw[fill=ruby, draw=ruby] (\x,\i) ellipse (0.2 and 0.3);
        \fill[white] (\x-0.08,\i+0.21) circle (0.04);
        \fill[white] (\x+0.08,\i+0.21) circle (0.04);
        \fill[black] (\x-0.08,\i+0.22) circle (0.02);
        \fill[black] (\x+0.08,\i+0.22) circle (0.02);
        \end{scope}
    \else\if\j c
        \node[fill=ForestGreen!40, minimum size=0.8cm, inner sep=0pt, draw=black] at (\x,\i) {};
          \begin{scope}[rotate around={30:(\x,\i)}]

        \draw[] (\x-0.05,\i+0.25) .. controls (\x-0.2,\i+0.4) .. (\x-0.35,\i+0.45);
        \draw[] (\x+0.05,\i+0.25) .. controls (\x+0.2,\i+0.4) .. (\x+0.35,\i+0.45);
        \foreach \y in {-0.1, 0, 0.1} {
          \draw (\x+0.1,\i+\y) -- (\x+0.35,\i+\y+0.05+\y);
          \draw (\x-0.1,\i+\y) -- (\x-0.35,\i+\y+0.05+\y);
        }
        \draw[fill=ruby, draw=ruby] (\x,\i) ellipse (0.2 and 0.3);
        \fill[white] (\x-0.08,\i+0.21) circle (0.04);
        \fill[white] (\x+0.08,\i+0.21) circle (0.04);
        \fill[black] (\x-0.08,\i+0.22) circle (0.02);
        \fill[black] (\x+0.08,\i+0.22) circle (0.02);
        \end{scope}
    \else\if\j d
        \node[fill=blue!30, minimum size=0.8cm, inner sep=0pt, draw=black] at (\x,\i) {};
          \begin{scope}[rotate around={-60:(\x,\i)}]

        \draw[] (\x-0.05,\i+0.25) .. controls (\x-0.2,\i+0.4) .. (\x-0.35,\i+0.45);
        \draw[] (\x+0.05,\i+0.25) .. controls (\x+0.2,\i+0.4) .. (\x+0.35,\i+0.45);
        \foreach \y in {-0.1, 0, 0.1} {
          \draw (\x+0.1,\i+\y) -- (\x+0.35,\i+\y+0.05+\y);
          \draw (\x-0.1,\i+\y) -- (\x-0.35,\i+\y+0.05+\y);
        }
        \draw[fill=ruby, draw=ruby] (\x,\i) ellipse (0.2 and 0.3);
        \fill[white] (\x-0.08,\i+0.21) circle (0.04);
        \fill[white] (\x+0.08,\i+0.21) circle (0.04);
        \fill[black] (\x-0.08,\i+0.22) circle (0.02);
        \fill[black] (\x+0.08,\i+0.22) circle (0.02);
        \end{scope}
    \else
            \node[fill=black!10, minimum size=0.8cm, inner sep=0pt, draw=black] at (\x,\i) {
            
        };
\fill[white] (\x,\i) circle (0.4);
\fill[blue!40] (\x,\i) circle (0.3); %
\fill[blue!60] (\x,\i) circle (0.1);
\foreach \angle in {60, 120, 240, 300} {
  \draw[blue!80!black, thick]
    (\x,\i)
    .. controls
    ({\x + 0.3*cos(\angle+15)}, {\i + 0.5*sin(\angle+15)}) and
    ({\x + 0.6*cos(\angle)}, {\i + 0.7*sin(\angle)})
    ..
    ({\x + 0.8*cos(\angle)}, {\i + 0.2*sin(\angle)});
}
    \fi\fi\fi\fi\fi\fi\fi
  }
}
\end{tikzpicture}
        \end{adjustbox}
        \caption{Multiple initial locations}
        \label{fig:bug2}
    \end{subfigure}
    \caption{Running example: Beetle wants to reach the target.}
    \label{fig:bug}
\end{figure}
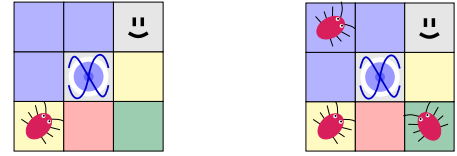

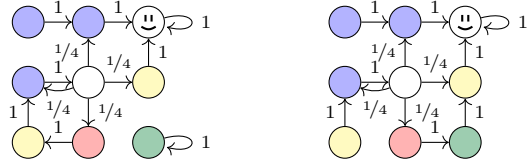
\begin{figure}
    \centering%
    \begin{subfigure}{0.49\linewidth}
        \centering%
            \begin{tikzpicture}[pMC]
                \node[state, fill=yellow!30] (s00) {};
                \node[state, fill=red!30] (s01) [right of=s00] {};
                \node[state, fill=ForestGreen!40] (s02) [right of=s01] {};
                
                \node[state, fill=blue!30] (s10) [above of=s00] {};
                \node[state] (s11) [right of=s10] {};
                \node[state, fill=yellow!30] (s12) [right of=s11] {};
                
                \node[state, fill=blue!30] (s20) [above of=s10] {};
                \node[state, fill=blue!30] (s21) [right of=s20] {};
                \node[state] (s22) [right of=s21] {$\good$};
                
                \path[->] (s00) edge node[left] {$1$} (s10);
                
                \path[->] (s20) edge node[above] {$1$} (s21);
                \path[->] (s10) edge node[above] {$1$} (s11);
                \path[->] (s21) edge node[above] {$1$} (s22);
                \path[->] (s01) edge node[above] {$1$} (s00);
                
                \path[->] (s02) edge[loop right] node[right] {$1$} (s02);
                \path[->] (s12) edge node[right] {$1$} (s22);
                
                \path[->] (s11) edge node[left] {$\nicefrac{1}{4}$} (s21);
                \path[->] (s11) edge node[right] {$\nicefrac{1}{4}$} (s01);
                \path[->] (s11) edge node[above] {$\nicefrac{1}{4}$} (s12);
                \path[->, bend left=20] (s11) edge node[below] {$\nicefrac{1}{4}$} (s10);
                
                \path[->] (s22) edge[loop right] node[right] {$1$} (s22);
            \end{tikzpicture}
            \caption{\(\{d_r\eq 1, d_g\eq 3, d_b \eq 3, d_y \eq 0\}\) solves single initial location}
            \label{fig:bugmc1}
        \end{subfigure}
    \begin{subfigure}{0.49\linewidth}
        \centering%
            \begin{tikzpicture}[pMC]
                \node[state, fill=yellow!30] (s00) {};
                \node[state, fill=red!30] (s01) [right of=s00] {};
                \node[state, fill=ForestGreen!40] (s02) [right of=s01] {};
                
                \node[state, fill=blue!30] (s10) [above of=s00] {};
                \node[state] (s11) [right of=s10] {};
                \node[state, fill=yellow!30] (s12) [right of=s11] {};
                
                \node[state, fill=blue!30] (s20) [above of=s10] {};
                \node[state, fill=blue!30] (s21) [right of=s20] {};
                \node[state] (s22) [right of=s21] {$\good$};
                
                \path[->] (s00) edge node[left] {$1$} (s10);
                
                \path[->] (s20) edge node[above] {$1$} (s21);
                \path[->] (s10) edge node[above] {$1$} (s11);
                \path[->] (s21) edge node[above] {$1$} (s22);
                \path[->] (s01) edge node[above] {$1$} (s02);
                
                \path[->] (s02) edge node[right] {$1$} (s12);
                \path[->] (s12) edge node[right] {$1$} (s22);
                
                \path[->] (s11) edge node[left] {$\nicefrac{1}{4}$} (s21);
                \path[->] (s11) edge node[right] {$\nicefrac{1}{4}$} (s01);
                \path[->] (s11) edge node[above] {$\nicefrac{1}{4}$} (s12);
                \path[->, bend left=20] (s11) edge node[below] {$\nicefrac{1}{4}$} (s10);
                \path[->] (s22) edge[loop right] node[right] {$1$} (s22);

            \end{tikzpicture}
            \caption{\(\{d_r\eq 3, d_g\eq 0, d_b \eq 3, d_y \eq 0\}\) solves multiple initial locations}
            \label{fig:bugmc2}
        \end{subfigure}
        \caption{Markov chains induced by different beetle policies.}
    \end{figure}

In this section, we introduce a motivating example that we refer to throughout this paper. We first introduce the problem and then work towards a formalization.

\paragraph{Problem description.}
Consider Fig.~\ref{fig:bug}: The agent (depicted as a beetle) moves in a tiny grid-world environment including the target in the north east corner, the only cell equipped with a reward,
and a fountain in the middle, which will spray the beetle in a random direction.
We impose the following simplified constraints: (1) the beetle must move in the same direction in cells with the same floor color, and (2) must go in a different direction on blue tiles than on yellow tiles. Our goal is to find a memoryless policy that eventually 
collects the reward. 
\cref{fig:bug1} illustrates this problem for a known initial location of the agent. This problem corresponds to \emph{structurally constrained} synthesis of (memoryless) policies in (PO)MDPs. 
If the agent must use a single policy that is robust against perturbations of the initial state (see \cref{fig:bug2}), the problem becomes harder
and corresponds to a constrained \emph{and robust} policy synthesis in (PO)MDPs. 

\paragraph{Challenges.}
Two aspects make the example challenging: 
\begin{compactenum}
    \item Some policy choices (e.g., the beetle goes right on blue tiles) can affect multiple state-action pairs in the model (e.g., the direction for all blue tiles).
    \item Some constraints on policy or environment choices (e.g., the beetle goes in different directions on blue and yellow tiles) involve multiple choices at once.
\end{compactenum}
More generally, the first aspect is crucial when reasoning about robust policies for multiple environments, where the agent cannot observe which environment is describing the world. In those cases, a robust policy must take the same actions in all environments. This adds structurally simple identities between action choices in different states, similar to observation-based memoryless policies in POMDPs. 
Beyond these structurally simple identities, the second property is a simple type of additional constraint. 

\paragraph{Problem modeling.}
We reason about the policy and environment in a unified parameter space, where policies are given using \emph{controllable parameters} \(X = \{d_r,d_g,d_b,d_y\}\) (i.e., the direction for each color). We can impose constraints such as \(d_b \nneq d_y\). To represent the environments, we use \emph{uncontrollable parameters} \(Y \eq \{s_x, s_y\}\) (i.e., representing the starting coordinates).
The beetle's policy must be robust against the choice of the parameters in \(Y\). The possible starting locations are restricted by the constraint $(s_x\eq0 \land s_y \eq 0) \lor (s_x \eq 2 \land s_y \eq 0) \lor (s_x \eq 0 \land s_y \eq 2)$. In general, constraints can be represented by arbitrary logical formulas over $X \cup Y$. A satisfying MC (corresponding to a satisfying assignment) for the single initial location can be seen in \cref{fig:bugmc1}. In the case of multiple initial locations, we have a quantifier alternation: we ask whether there exists an assignment of the controllable parameters that is satisfying for all assignments of the uncontrollable parameters. A satisfying MC (where a reward $1$ is collected from each initial state) is shown in \cref{fig:bugmc2}.

\paragraph{Notation.} We present our approach using generic (un)controllable parameters $X, Y$ and assignments rather than using policies and environments.

\section{Problem Statement}
\label{sec:statement}

We formalize the main problem statement. We first recap Markov decision processes (MDPs) and colored MDPs.

\begin{definition}[Markov Decision Process]
    A \emph{Markov decision process (MDP)} is a tuple \( \M = (S, s_I, Act, \mathcal{P}, R)\) with a finite set \(S\) of states, an initial state \( s_I \in S \), a finite set  \(Act\) of actions, a partial transition function \(\mathcal{P}\colon S \times Act \rightharpoonup Distr(S)\) and a state reward function \(R : S \rightarrow \mathbb{R}_{\geq 0}\).
\end{definition}
We write \(Act(s) := \{\alpha \in Act \mid \mathcal{P}(s, \alpha) \neq \bot\}\) for \emph{ available actions} at \(s\).
An MDP policy is a function \(\pi\colon S \rightarrow Act\) s.t. \(\pi(s) \in Act(s)\) for all \(s \in S\).
A Markov chain (MC) is an MDP where \(|Act(s)|=1\) for all \(s \in S\). We denote their transition functions as \(\mathcal{P} \colon S \rightarrow Distr(S)\). 
For MDP \(\M\) and policy \(\pi\), we define the induced MC \(\M[\pi] = (S, s_I, \mathcal{P}', R)\) with \(\mathcal{P}'(s) = \mathcal{P}(s, \pi(s))\) for any \(s \in S\).

\paragraph{Specification.} We consider undiscounted indefinite-horizon expected rewards~\cite{Put94}. These generalize both finite-horizon rewards and infinite-horizon discounted rewards.
Given an MC \(\D\), a \emph{path} \(\xi\) in \(\D\) is an infinite sequence of states \(s_0 s_1 s_2 \ldots \in S\) where \(s_0 = s_I\). We define \(R(\xi) = \sum_{i=0}^{\infty} R(s_i)\) as the (possibly infinite) reward of a path. We define the \emph{value} of $D$ as \(V(\D) := \mathbb{E} [R(\xi)]\) if the expected value exists and $\infty$ otherwise.
For an MDP \(\M\), we define the \emph{maximal value} \(V^{\max}(\M) := \max_\pi V(\M[\pi])\) and the \emph{minimal value} \(V^{\min}\) analogously.

\paragraph{Parameter space.}
We introduce a \emph{parameter space} to symbolically reason about policies and environments. Let \(X\) be a set of \emph{controllable parameters} and \(Y\) be a set of \emph{uncontrollable parameters}, representing, e.g., policies and environments, respectively. The sets $X$ and $Y$ are disjoint. 
We denote assignments to variables \(X \cup Y\) by values from \(\mathbb{Z}\) with \(\mathbb{Z}^{X \cup Y}\). We consider a finite \emph{constrained parameter space} \(\mathbb{V} \subseteq \mathbb{Z}^{X \cup Y}\).  In the remainder, $\theta$ denotes assignments in $\mathbb{V}$, while $\eta \subseteq \mathbb{V}$ denotes a nonempty subset of~\(\mathbb{V}\) and is referred to as a \emph{set of assignments}. We write \(\eta(p) = \{\theta(p) \mid \theta\in \eta\}\) for parameter \(p \in X \cup Y\). In this paper, we represent \(\mathbb{V}\) as the satisfying assignments of a logical formula.

\paragraph{Projected assignments.}
For $Z \in \{X, Y\}$ and $\theta \in \mathbb{V}$, the projected assignment $\theta_Z \in \mathbb{Z}^Z$ coincides with $\theta$ on the variables $Z$. We use such projections to write assignment $\theta$ as $\theta_X\theta_Y$ and lift them to the parameter space: \(\mathbb{V}_X := \{\theta_X \in \mathbb{Z}^X \mid \exists \theta_Y:  \theta_X\theta_Y \in \mathbb{V}\}\). \(\mathbb{V}_Y\) is defined analogously. 

\paragraph{Colored MDPs.} We represent finite sets of MCs by a \emph{colored MDP} \cite{DBLP:journals/jair/AndriushchenkoCMJK25}. A colored MDP~$\C$ relates to the constrained parameter space \(\mathbb{V}\) by mapping state-action pairs $(s,a)$ in $\C$ to a set of assignments $\eta \subseteq \mathbb{V}$. This represents for which $\theta \in \eta$ distribution $(s,a)$ exists:

\begin{definition}[Colored MDP]
    A \emph{colored MDP} \(\C = (\M, \mathbb{V}, \kappa)\) consists of an MDP \(\M\) with states $S$ and actions $Act$, a constrained parameter space \(\mathbb{V}\), and a \emph{coloring} \(\kappa(s,\alpha) \subseteq \mathbb{V}\), such that for all \(\theta \in \mathbb{V}, s \in S\), there is exactly one action \(\alpha \in Act(s)\) with \(\theta \in \kappa(s, \alpha)\).
\end{definition}

\begin{rnexample}
    For \cref{fig:bug1}, suppose \(s \in S\) is a blue state with \(Act(s) = \{0, 1, 2, 3\}\) representing the directions. Then \(\kappa(s, \alpha) = \{\theta \mid \theta(d_b) = \alpha\}\) for \(\alpha \in Act(s)\). In \cref{fig:bugmc1}, we have \(\theta(d_b)=3\), which selects action \(3\) in all blue states.
\end{rnexample}

Given a colored MDP \(\C\) on \(\mathbb{V}\) and a set of assignments \(\eta\), the \emph{induced MDP} \(\C[\eta]\) restricts \(\C\) to actions admissible by \(\eta\):

\begin{definition}[Induced MDP]
    \label{def:inducedmdp}
    For a set of assignments \(\eta \subseteq \mathbb{V}\), the \emph{induced MDP} \(\C[\eta]\) is the largest sub-MDP \(\M'\) of \(\M\) with actions \(Act'\) s.t. \(\eta \cap \kappa(s, \alpha) \neq \emptyset\) for all \(s \in S, \alpha \in Act'(s)\).
    For \(\theta \in \mathbb{V}\), the MDP \(\C[\theta] := \C[\{\theta\}]\) is an MC.
\end{definition}
We can now formally state the problem this paper tackles:

\begin{mdframed}[style=MyFrame,nobreak=true]
\textbf{Problem Statement} (Robust Constrained Feasibility).
Given a colored MDP \(\C\) with constrained parameter space \(\mathbb{V}\) on controllable parameters \(X\) and uncontrollable parameters \(Y\), and a threshold \(\nu \in \mathbb{R}\), does there exist an assignment \(\theta_X \in \mathbb{V}_X\) s.t for all assignments \(\theta_Y \in \mathbb{V}_Y\) with \(\theta_X  \theta_Y \in \mathbb{V}\): \(V(\C[\theta_X \theta_Y]) \geq \nu\)?
\end{mdframed}

We thus ask for a policy, encoded as \(\theta_X\), that satisfies the value threshold \(\nu\) in all environments, encoded by \(\theta_Y\).
In the remainder, we assume a fixed \(\nu\) and omit it for conciseness.

\section{The SMPMC Approach}
\label{section:intosmt}

We introduce \emph{Satisfiability Modulo Probabilistic Model Checking} (SMPMC, pronounced \emph{sampick}). We first define a first-order theory that encodes that the value of a colored MDP at an assignment is satisfactory and show that the validity of specific sentences in this theory corresponds to solving the problem statement. We then discuss how our novel SMPMC approach integrates this theory into the lazy constraint-driven conflict learning (CDCL) method. Finally, we describe how to extend this method to solve robust constrained synthesis.\footnote{
We provide an extended example in Appendix B.1 and prove all theorems in Appendix B.2 (\url{https://arxiv.org/abs/2511.08078}).
}

\subsection{First-order formulation}
Given an instance of the problem statement on a colored MDP \(\C\) and let \(X \cup Y = \{p_1, p_2, \ldots, p_n\}\). We formulate the problem statement in a dedicated first-order theory \(\mathbb{T}_\C\) over bounded integer variables $X \cup Y$, which introduces a single $n$-ary predicate \(\text{\viable}\) over \(X \cup Y\).
Intuitively, the atom $\viable(\theta(p_1),\dots, \theta(p_n))$ evaluates to true iff $V(\C[\theta]) \geq \nu$.  Formally, we uniquely characterize the theory \(\mathbb{T}_\C\) by the set of true sentences. The true sentences are the deductive closure of the following sets of sentences:
\begin{align*}
    &\{\viable(\theta(p_1), \ldots, \theta(p_n)) &\mid \theta \in \mathbb{V} \text{ with } V(\C[\theta]) \geq \nu\},\\
    & \{\lnot\viable(\theta(p_1), \ldots, \theta(p_n)) &\mid \theta \in \mathbb{V} \text{ with } V(\C[\theta]) < \nu\}.
\end{align*}
In particular, this ensures that for an assignment \(\theta \in \mathbb{V}\), the first-order sentence \(\viable(\theta(p_1), \ldots, \theta(p_n))\) is true in \(\mathbb{T}_\C\) iff the colored MDP \(\C\) has a value of at least \(\nu\) at \(\theta\).

By including the theory of the bounded integers \cite[p831]{DBLP:series/faia/2009-185}, we can use a formula $\tau_\mathbb{V}$ to represent the constraints of the constrained parameter space \(\mathbb{V}\) logically. Given the logical formula \(\tau_\mathbb{V}\), the problem statement can be rephrased as a first-order sentence:
\begin{align*}
    \Phi := \; &\exists x_1 \ldots \exists x_n \; \forall y_1 \ldots  \forall y_m:\tau_\mathbb{V}(x_1, \ldots, x_n, y_1, \ldots, y_m)\\ \rightarrow \; &\viable(x_1, \ldots, x_n, y_1, \ldots, y_m).
    \tag{1} \label{eq:problem}
\end{align*}

\begin{restatable}{theorem}{validtheorem}
    \label{theorem:valid}
    For a colored MDP \(\C\), the assignment \(\theta_X \in \mathbb{V}_X\) over variables \(X = \{x_1, \ldots, x_n\}\) is a satisfying assignment of the existential quantifier in the sentence \(\Phi\) if and only if \(\theta_X\) is a solution to the problem statement.
\end{restatable}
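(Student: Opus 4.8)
The plan is to prove the biconditional by unwinding the semantics of $\mathbb{T}_\C$ into a short chain of equivalences, relying on two facts fixed by the construction of the theory. First, for every $\theta \in \mathbb{V}$ the atom $\viable(\theta(p_1),\dots,\theta(p_n))$ is true in $\mathbb{T}_\C$ exactly when $V(\C[\theta]) \geq \nu$; this is immediate from the two axiom sets defining $\mathbb{T}_\C$. Second, by the inclusion of the bounded-integer theory, $\tau_\mathbb{V}$ exactly characterises membership in $\mathbb{V}$, i.e.\ $\tau_\mathbb{V}(\theta)$ holds iff $\theta \in \mathbb{V}$, and this truth value is decided uniformly (the same in every model) because the bounded-integer theory is complete. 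These two facts are everything the argument consumes; the rest is definition chasing.

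Concretely, I would fix $\theta_X \in \mathbb{V}_X$ and observe that $\theta_X$ witnesses the existential quantifier of $\Phi$ iff substituting it for $x_1,\dots,x_n$ yields a sentence that is true in $\mathbb{T}_\C$, namely $\forall y_1\dots\forall y_m\colon \tau_\mathbb{V}(\theta_X,\vec y)\rightarrow\viable(\theta_X,\vec y)$. By the semantics of the universal quantifier, this is true iff for every assignment $\theta_Y$ the guarded implication $\tau_\mathbb{V}(\theta_X\theta_Y)\rightarrow\viable(\theta_X\theta_Y)$ is true. I would then case-split on membership in $\mathbb{V}$: if $\theta_X\theta_Y \notin \mathbb{V}$, the antecedent is false by the second fact and the implication holds vacuously; if $\theta_X\theta_Y \in \mathbb{V}$, the antecedent is true and the implication reduces to $\viable(\theta_X\theta_Y)$, which by the first fact holds iff $V(\C[\theta_X\theta_Y])\geq\nu$. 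Collecting the cases, the substituted matrix is true iff $V(\C[\theta_X\theta_Y])\geq\nu$ for all $\theta_Y$ with $\theta_X\theta_Y\in\mathbb{V}$ — which is precisely the condition making $\theta_X$ a solution to the problem statement. Each step is an equivalence, so both directions follow simultaneously.

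I expect the one genuinely delicate point to be the status of $\viable$ outside $\mathbb{V}$: the defining sentences leave $\viable(\theta)$ unconstrained for $\theta\notin\mathbb{V}$, so $\mathbb{T}_\C$ is not complete and $\viable(\theta)$ can vary across models there. I must therefore be explicit that ``true in $\mathbb{T}_\C$'' means entailment (truth in all models) and argue that the vacuous case is insensitive to this underspecification, since a false guard makes the implication hold in every model regardless of how $\viable$ is interpreted on that point. This is clean here because the variables are bounded integers, so the universal quantifier is a finite conjunction and the guarded sentence is decided conjunct by conjunct. A second, minor subtlety worth flagging is that the hypothesis $\theta_X\in\mathbb{V}_X$ is actually used: for $\theta_X\notin\mathbb{V}_X$ the guard is false for all $\theta_Y$ and the matrix holds vacuously, yet such a $\theta_X$ is not admitted by the problem statement, so restricting both sides of the biconditional to $\theta_X\in\mathbb{V}_X$ is what keeps the statement tight.
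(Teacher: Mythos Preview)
Your proposal is correct and follows essentially the same approach as the paper: substitute $\theta_X$ into $\Phi$, unfold the universal quantifier, and use that $\tau_{\mathbb{V}}$ characterises membership in $\mathbb{V}$ while the $\viable$ atom captures the threshold condition. The paper's version separates the two directions (declaring the reverse ``analogous'') and is terser; you run both at once as a chain of equivalences and are more explicit about the incompleteness of $\mathbb{T}_\C$ outside $\mathbb{V}$ and about why the hypothesis $\theta_X\in\mathbb{V}_X$ is needed---a point the paper also remarks on at the end of its proof.
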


\subsection{CDCL and theory solvers}

We first describe SMPMC assuming that \(Y = \emptyset\) and leave the evaluation of the universal quantifier for \cref{subseq:universal}.
The core algorithm in most SMT solvers is \emph{Conflict-Driven Clause Learning (CDCL)} \cite{DBLP:conf/aaai/BayardoS97,DBLP:journals/tc/Marques-SilvaS99}.
The goal of CDCL is to find a satisfying assignment of the existentially quantified variables in our input formula \(\Phi\). To do this, CDCL iteratively constructs a partial assignment \(K\) of these variables\footnote{Strictly speaking, our variables are bit vectors. We then say a variable is assigned when all bits are assigned.}.
At each step of this iteration, CDCL invokes a subroutine to check whether any full extension of \(K\) could satisfy \(\Phi\). 
If this is not the case, we call \(K\) a \emph{conflict}. In CDCL with theories \cite{DBLP:journals/jacm/NieuwenhuisOT06}, all theories involved in the assignment \(K\) provide a \emph{theory solver} that is queried and asked whether \(K\) is a conflict within that theory.  In SMPMC, we provide a theory solver for the theory \(\mathbb{T}_\C\) introduced above. \cref{fig:cegis} describes the loop in red: (1)~The SMT solver provides the theory solver with \(K\). (2)~The theory solver reports whether \(K\) is a conflict.

We match a partial assignment \(K\) to a set of first-order \emph{literals} that assign variables to values. In the following, we use both interchangeably. A partial assignment \(K\) is \emph{consistent} if there exists a full variable assignment \(\theta\) that makes all literals in \(K\) true, i.e., \(\theta\) is a full extension of the partial assignment \(K\). 
The set \(K\) is \emph{inconsistent} if there is no such variable assignment.
We call a partial assignment \(K\) a \emph{\(\mathbb{T}_\C\)-conflict} if \(K \cup \mathbb{T}_\C\) is inconsistent. If the theory solver proves \(K\) to be a \conflict{}, it propagates a conflict back to the SMT solver. It repeats this process until it finds a satisfying assignment or proves \(\Phi\) false.
We illustrate the CDCL algorithm combined with our theory solver in the following~example.

\begin{rnexample}
    \cref{fig:cdcl} shows a hypothetical run of CDCL where it has fixed three variables. The partial assignment is: $K =\{\viable(d_r,d_g,d_b,d_y), d_y\eq 0,d_r\eq 2,d_b\eq 1\}$.
    This means that in \cref{fig:bug1}, the beetle must go north on yellow tiles, south on red tiles, and west on blue tiles. The theory solver proves that \(K\) is a \conflict{}.
\end{rnexample}
From now on, \(K\) contains only \((\lnot)\viable(p_1, \ldots, p_n)\) and assignments of the variables \(p_1, \ldots, p_n \in X \cup Y\). The assignment \(\theta \in \mathbb{V}\) satisfies \(K\) if it satisfies its conjunction.

\begin{figure}
    \centering
    \begin{tikzpicture}[level distance=1cm, sibling distance=1cm, edge from parent/.style={draw, ->, thick}]
        \node (mainroot) {\small root}
            child {node (root) [circle, draw, thin] {\footnotesize $d_y$}
            child {node {$\ldots$}}
            child {node {$\ldots$}}
            child {node (x4) [circle, draw, thin] {\footnotesize $d_r$}
                child {node {$\ldots$}}
                child {node (conflict) {\LARGE\Lightning}
                edge from parent node[right, yshift=-0.0cm] {\small 1}
                }
                child {node {$\ldots$}}
            edge from parent node[right, yshift=0.2cm] {\small 2}
            }
            edge from parent node[left, yshift=0.1cm] {\small 0}
        }
        child {node {$\ldots$}};
        \node[anchor=west, fill=yellow!30] (sign1) at ([xshift=2cm]root.east) {\fbox{\small $d_y\eq 0$ (go north)}};
        \node (sign0) at ([yshift=1cm]sign1) {\fbox{\small $\viable(d_r,d_g,d_b,d_y)$}};
        \node[fill=red!30] at ([yshift=-1cm]sign1) {\fbox{\small $d_r\eq 2$ (go south)}};
        \node[fill=blue!30] at ([yshift=-2cm]sign1) {\fbox{\small $d_b\eq 1$ (go west)}};
    \end{tikzpicture}
    \caption{Sketch of CDCL.}
    \label{fig:cdcl}
\end{figure}
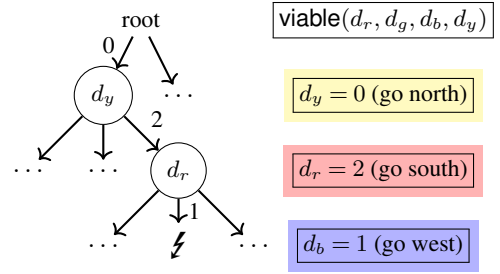

\subsection{Theory solver for \(\mathbb{T}_\C\)}

The theory solver wants to prove that a partial assignment \(K\) is a \conflict{}. As \(\mathbb{V}\) is finite, one correct but intractable approach to decide whether a given \(K \cup \mathbb{T}_\C\) is (in)consistent is to enumerate all possible assignments. Instead, the theory solver outlined in this section checks an induced MDP as an \emph{abstraction} of the colored MDP \(\C\). If the value of the induced MDP is not within the threshold \(\nu\), it derives a \conflict{}:

\begin{restatable}{theorem}{inconsistent}
    \label{prop:inconsistent}
    Let \(\C\) be a colored MDP and \(K\) be a partial assignment to \(\Phi\) s.t. \(v := \viable(p_1, \ldots, p_n) \in K\). Let
    \(\eta = \{\theta \mid \theta \text{ satisfies } K \setminus \{v\}\}\).
    If \(V^{{\max}}(\C[\eta]) < \nu\), then \(K\) is a \conflict.
    The same is true for \(\lnot v \in K\) and \(V^{{\min}}(\C[\eta]) \geq \nu\).
\end{restatable}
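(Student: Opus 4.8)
The plan is to prove the statement directly by contradiction, reducing consistency of $K \cup \mathbb{T}_\C$ to the existence of a single witnessing assignment and then controlling its value via the induced MDP. First I would unfold the definition of a $\mathbb{T}_\C$-conflict: $K$ is a $\mathbb{T}_\C$-conflict iff $K \cup \mathbb{T}_\C$ is inconsistent, i.e., iff no full assignment $\theta \in \mathbb{V}$ makes every literal of $K$ true while being consistent with $\mathbb{T}_\C$. Assuming $V^{\max}(\C[\eta]) < \nu$, I suppose toward a contradiction that such a witness $\theta$ exists. Since $\theta$ satisfies every literal of $K \setminus \{v\}$, the definition of $\eta$ gives $\theta \in \eta$ immediately. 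Since $\theta$ also satisfies $v = \viable(p_1,\dots,p_n)$ and is consistent with $\mathbb{T}_\C$, the characterization of the true sentences of $\mathbb{T}_\C$ forces $\viable(\theta(p_1),\dots,\theta(p_n))$ to be a true sentence, which by construction of the theory means $V(\C[\theta]) \geq \nu$.

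The crux, and the step I expect to be the main obstacle, is the inequality $V(\C[\theta]) \leq V^{\max}(\C[\eta])$, as it is the only point that uses the structure of colored and induced MDPs rather than pure logic. I would establish it by exhibiting $\C[\theta]$ as the Markov chain $\C[\eta][\pi_\theta]$ induced by a policy $\pi_\theta$ of the MDP $\C[\eta]$. Concretely, for each state $s$ let $\pi_\theta(s)$ be the unique action $\alpha$ with $\theta \in \kappa(s,\alpha)$, which exists by the colored MDP definition. Because $\theta \in \eta$, we have $\theta \in \eta \cap \kappa(s,\pi_\theta(s))$, so $\pi_\theta(s)$ is admissible in the induced MDP by \cref{def:inducedmdp}; hence $\pi_\theta$ is a valid policy of $\C[\eta]$. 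Both $\C[\theta]$ and $\C[\eta][\pi_\theta]$ select exactly $\pi_\theta$'s action at every state and inherit the same transition probabilities and rewards from $\M$, so they coincide as MCs. The inequality then follows from $V^{\max}(\C[\eta]) = \max_\pi V(\C[\eta][\pi]) \geq V(\C[\eta][\pi_\theta]) = V(\C[\theta])$.

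Combining the two facts yields $\nu \leq V(\C[\theta]) \leq V^{\max}(\C[\eta]) < \nu$, a contradiction, so no such witness exists and $K$ is a $\mathbb{T}_\C$-conflict. The dual claim is symmetric: assuming $\lnot v \in K$ and $V^{\min}(\C[\eta]) \geq \nu$, a consistent witness $\theta \in \eta$ would satisfy $\lnot\viable$, forcing $V(\C[\theta]) < \nu$, while the same policy-embedding argument gives $V(\C[\theta]) = V(\C[\eta][\pi_\theta]) \geq V^{\min}(\C[\eta]) \geq \nu$, again a contradiction. Finally, I would dispatch the degenerate case $\eta = \emptyset$ separately: then $K \setminus \{v\}$ is already unsatisfiable over $\mathbb{V}$, so $K$ is trivially a $\mathbb{T}_\C$-conflict and the value hypothesis on $\C[\eta]$ is vacuous.
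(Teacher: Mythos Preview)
Your proof is correct and follows the same overall logical skeleton as the paper: reduce consistency of $K \cup \mathbb{T}_\C$ to the existence of a single $\theta \in \eta$ with $V(\C[\theta]) \geq \nu$, then rule that out via the bound $V(\C[\theta]) \leq V^{\max}(\C[\eta])$.

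The one genuine difference is in how this bound is obtained. The paper introduces an auxiliary colored MDP $\C_\eta := (\C[\eta], \eta, \kappa)$, observes $\C[\eta'] = \C_\eta[\eta']$ for all $\eta' \subseteq \eta$, and then invokes an external result (Theorem~1 of \cite{DBLP:journals/jair/AndriushchenkoCMJK25}) to conclude $\max_{\theta \in \eta} V(\C_\eta[\theta]) \leq V^{\max}(\C_\eta[\eta])$. You instead give the elementary, self-contained argument: exhibit $\C[\theta]$ as $\C[\eta][\pi_\theta]$ for the policy $\pi_\theta(s)$ picking the unique $\alpha$ with $\theta \in \kappa(s,\alpha)$, so that $V(\C[\theta]) = V(\C[\eta][\pi_\theta]) \leq \max_\pi V(\C[\eta][\pi]) = V^{\max}(\C[\eta])$. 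Your route avoids both the auxiliary construction and the external citation, and is exactly what that cited theorem amounts to in this special case; the paper's version has the advantage of pointing the reader to a more general statement already in the literature. Your explicit handling of the degenerate case $\eta = \emptyset$ is a nice touch that the paper leaves implicit.
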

\cref{prop:inconsistent} follows from 
\cite{DBLP:journals/jair/AndriushchenkoCMJK25}. The intuition is that if \(V^{{\max}}(\C[\eta]) < \nu\), then all induced MCs of the MDP \(\C[\eta]\) have a value below \(\nu\). Thus, there is no assignment \(\theta \in \eta\) such that \(V(\C[\theta]) \geq \nu\). Note that because deciding whether \(K\) is a \conflict{} is NP-hard, \cref{prop:inconsistent} only gives us a complete method if \(\eta\) contains a single \(\theta\).

SMT solvers profit from theory solvers that find small conflicts \cite[p849]{DBLP:series/faia/2009-185}. In particular, a \conflict{} \(K\) may contain statements that do not contribute to being one. Suppose that we know that \(K\) is a \conflict{}, we now aim to find a smaller \conflict{} \(C(K) \subseteq K\).

We say that a coloring \(\kappa(s, \alpha)\) is \emph{independent} of a parameter \(p_i\) if the value of that parameter is irrelevant for membership in \(\kappa(s, \alpha)\). If \emph{all} colorings in \(\C\) that are reachable in \(\C[\eta]\) are independent of \(p_i\), it cannot contribute to \(K\) being a \conflict{}. We can thus remove all assignments of independent parameters from \(K\), yielding a \emph{restricted partial assignment} \(C(K)\) over the remaining assignments. We define independence and \(C(K)\) precisely in Appendix B.2.
\begin{restatable}{theorem}{smallerconflicts}
    \label{theorem:smallerconflicts}
    Given %
    \(\C\) and %
    \(K\) from \cref{prop:inconsistent}, the restricted partial assignment \(C(K) \subseteq K\) is also a \conflict{}.
\end{restatable}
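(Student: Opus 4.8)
The plan is to reduce the claim to \cref{prop:inconsistent} by showing that replacing \(K\) with \(C(K)\) does not change the \emph{reachable fragment} of the induced MDP, and therefore preserves the value \(V^{\max}\) (resp.\ \(V^{\min}\)) on which the conflict test is based. Concretely, let \(\eta = \{\theta \mid \theta \text{ satisfies } K \setminus \{v\}\}\) be as in \cref{prop:inconsistent}, and set \(\eta' = \{\theta \mid \theta \text{ satisfies } C(K)\setminus\{v\}\}\). Since \(C(K) \subseteq K\), the set \(\eta'\) is constrained by fewer literals, so \(\eta \subseteq \eta'\); consequently \(Act_{\C[\eta]}(s) \subseteq Act_{\C[\eta']}(s)\) for every \(s\), i.e.\ \(\C[\eta]\) is a sub-MDP of \(\C[\eta']\).

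The core lemma I would establish is that \(\C[\eta]\) and \(\C[\eta']\) have the same reachable fragment: the same set \(R\) of states reachable from \(s_I\), with \(Act_{\C[\eta]}(s) = Act_{\C[\eta']}(s)\) for every \(s \in R\). The inclusion ``\(\subseteq\)'' is immediate from \(\eta \subseteq \eta'\), since every transition present in \(\C[\eta]\) is present in \(\C[\eta']\). For the converse I argue by induction along a path \(s_I = s_0, \alpha_0, s_1, \ldots, s_k\) witnessing reachability in \(\C[\eta']\). The crux is to show that for \(s \in R\), any \(\alpha \in Act_{\C[\eta']}(s)\) already lies in \(Act_{\C[\eta]}(s)\): pick a witness \(\theta' \in \eta' \cap \kappa(s,\alpha)\) and define \(\theta\) by resetting every independent parameter that \(K\) assigns to its value in \(K\), leaving all other coordinates as in \(\theta'\). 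Because \(s \in R\), the coloring \(\kappa(s,\alpha)\) is reachable in \(\C[\eta]\) and hence, by the definition of \(C(K)\), independent of every such parameter, so \(\theta \in \kappa(s,\alpha)\); and since \(\kappa(s,\alpha) \subseteq \mathbb{V}\) this also gives \(\theta \in \mathbb{V}\). By construction \(\theta\) agrees with \(K\) on every parameter \(K\) assigns (on the reset independent ones by fiat, on the rest because there \(\theta = \theta'\) already satisfies \(C(K)\setminus\{v\}\), which is exactly the non-independent part of \(K\setminus\{v\}\)), so \(\theta \in \eta \cap \kappa(s,\alpha)\) and thus \(\alpha \in Act_{\C[\eta]}(s)\). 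Transition distributions and rewards are inherited unchanged from \(\M\), so equal action sets at \(s\) propagate reachability to successors, closing the induction.

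With the lemma in hand, the fragments reachable from \(s_I\) in \(\C[\eta]\) and \(\C[\eta']\) are identical as MDPs (same states, actions, transitions, and rewards), and since \(V^{\max}\) and \(V^{\min}\) depend only on this fragment, \(V^{\max}(\C[\eta']) = V^{\max}(\C[\eta])\) and \(V^{\min}(\C[\eta']) = V^{\min}(\C[\eta])\). In the case \(v \in K\), \cref{prop:inconsistent} gives \(V^{\max}(\C[\eta]) < \nu\), hence \(V^{\max}(\C[\eta']) < \nu\), and applying \cref{prop:inconsistent} to \(C(K)\) shows \(C(K)\) is a \conflict{}; the case \(\lnot v \in K\) is symmetric via \(V^{\min}\).

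I expect the main obstacle to be the reverse inclusion of the core lemma — arguing that dropping the independent-parameter literals cannot \emph{enable} a new action at a reachable state. This is exactly where the precise definitions of independence and of ``colorings reachable in \(\C[\eta]\)'' (Appendix B.2) matter: independence must be required for \emph{all} actions \(\alpha \in Act(s)\) at reachable \(s\), not only those already selected in \(\C[\eta]\); otherwise a parameter could be spuriously deemed independent and its removal could unlock a higher-value action, breaking value preservation. The reset-and-project construction of \(\theta\), together with \(\kappa(s,\alpha)\subseteq\mathbb{V}\) to stay inside the parameter space, is the technical heart that makes this inclusion go through.
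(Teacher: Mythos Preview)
Your proposal is correct and follows essentially the same route as the paper: both arguments boil down to showing that the parameters removed when passing from \(K\) to \(C(K)\) cannot influence the value on which the conflict test of \cref{prop:inconsistent} is based, so the hypothesis \(V^{\max}(\C[\eta']) < \nu\) (resp.\ \(V^{\min}(\C[\eta']) \geq \nu\)) is inherited and \cref{prop:inconsistent} applies verbatim to \(C(K)\). The only stylistic difference is that the paper phrases the invariance at the level of individual induced Markov chains (arguing that \(V(\C[\theta])\) is unaffected by \(P\)-coordinates), whereas you work one abstraction level up and show that the reachable sub-MDPs of \(\C[\eta]\) and \(\C[\eta']\) coincide; your explicit reset-and-project construction of \(\theta\) from \(\theta'\), the induction on reachability, and especially your remark that independence must range over \emph{all} actions at reachable states (not just those already in \(\C[\eta]\)) make the argument considerably more careful than the paper's rather terse appendix proof.
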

\cref{theorem:smallerconflicts} describes a computationally efficient way to compute a smaller \conflict{} by computing the independent parameters. In contrast, the \emph{counterexample generalization} on full assignments in \cite{vcevska2021counterexample} is NP-hard.

\begin{rnexample}
    In \cref{fig:cdcl}, the theory solver checks whether for \(\eta = \{\theta \mid \theta(d_y)\eq0, \theta(d_r)\eq 2, \theta(d_b)\eq 1\}\), we have $V^{\max}(\C[\eta]) < 1$. This is true, so \(K\) is a \conflict{} because of \cref{prop:inconsistent}. As all colorings reachable in $\C[\eta]$ are independent of $d_r$, the theory solver can compute a smaller \conflict{} \(C(K) = \{\viable(d_r,d_g,d_b,d_y), d_y\eq 0, d_b\eq 1\}\).
\end{rnexample}

\subsection{Handling quantifier alternation}
\label{subseq:universal}

We now consider that \(Y \neq \emptyset\). Given the quantified statement \(\Phi\) from \cref{eq:problem}, the SMT solver uses \emph{model-based quantifier instantiation (MBQI)} to instantiate the universally quantified variables until it either finds a satisfying assignment for the entire formula or proves it false \cite{DBLP:conf/cav/GeM09}.

Given \(\Phi\) from \cref{eq:problem}, MBQI first finds a candidate variable assignment \(\theta_X\) for the existentially quantified variables \(x_1, \ldots, x_n\). 
Then, it substitutes these values into the quantifier-free formula \(\phi := \tau_{\mathbb{V}}(\ldots) \rightarrow \viable(\ldots)\), yielding the formula \(\phi^{\theta_X}(y_1, \ldots, y_m)\) over the variables in $Y$.
Next, MBQI checks whether \(\lnot \phi^{\theta_X}\) is inconsistent.
If it is inconsistent, then \(\theta_X\) is a satisfying assignment of \(\Phi\)'s existential quantifier, as there is no instantiation of \(y_1, \ldots, y_n\) that makes \(\phi^{\theta_X}\) false. Otherwise, there is a satisfying variable assignment \(\theta_Y\) of \(\lnot\phi^{\theta_X}\). MBQI adds a new conflict $\{\phi(x_1^{\theta_X}, \ldots x_n^{\theta_X}, y_1^{\theta_Y}, \ldots, y_n^{\theta_Y})\}$, which rules out at least $\theta_X$ (and might rule out more). It repeats this process until either a satisfying assignment is found or \(\Phi\) is proven false.

\begin{rnexample}
    Given the environment in \cref{fig:bug2}, suppose CDCL chooses \(\theta_X = \{d_r\eq 1, d_g\eq 3, d_b \eq 3, d_y \eq 0\}\). MBQI checks whether \(\lnot \phi^{\theta_X}(s_x, s_y)\) is inconsistent. 
    As the beetle starting in the bottom right corner will not reach the target in the MC depicted in \cref{fig:bugmc1},
    \(\{s_x\eq 2,s_y\eq 0\}\) satisfies \(\lnot \phi^{\theta_X}(s_x, s_y)\). MBQI pushes the conflict \(\{\phi(1,3,3,0,2,0)\}\), ruling out \(\theta_X\).
    CDCL chooses \(\theta_X' = \{d_r\eq 3, d_g\eq 0, d_b\eq 3, d_y\eq 0\}\) (see \cref{fig:bugmc2}). As \(\lnot \phi^{\theta_X'}(s_x, s_y)\) is inconsistent, the beetle will reach the target no matter where it starts with \(\theta_X'\).
\end{rnexample}

\section{Experiments}

We investigate the performance of the proposed policy synthesis algorithm SMPMC. The implementation and all the considered benchmarks are available\footnote{Code, benchmarks: \url{https://doi.org/10.5281/zenodo.17573007}}.

\begin{table*}
\centering
\begin{tabular}{@{}lcccll@{}}
\toprule
 & S.C.\({}^{*}\) & $\tau_{\mathbb{V}} \neq \top$ & $Y \neq \emptyset$ & \textbf{Problem Statements: Is there a...} & \textbf{Baselines} \\
\midrule
\textbf{C1} & \yes & \yes & \yes &
\makecell[l]{
decision tree policy that works in all environments?
\vspace{0.2em}
}
& \texttt{SMT(LRA)} \\
\textbf{C2} & \yes & \yes & \no &
\makecell[l]{
decision tree policy in a POMDP? \\
satisfying, cost-bounded MC in a set of MCs?
\vspace{0.2em}
} & \texttt{PAYNT-CEGIS}, \texttt{SMT(LRA)} \\
\textbf{C3} & \yes  & \no & \yes &
\makecell[l]{
policy in a POMDP that works in all environments? \\
satisfying MC that is robust against perturbations?
\vspace{0.2em}
} & \texttt{\texttt{PAYNT-AR}} (extended), \texttt{SMT(LRA)} \\
\textbf{C4} & \yes & \no & \no &
\makecell[l]{
policy in a POMDP? \\
satisfying MC in a set of MCs?
} & \texttt{PAYNT-\{AR,CEGIS\}}, \texttt{SMT(LRA)} \\
\bottomrule
\end{tabular}
\caption{Summary of configurations, problem statements, and baselines we compare with. The policies are memoryless or fixed-memory. \({}^{*}\)All problem categories exhibit structural constraints, e.g., in the form of partial observability.}
\label{tab:configurations}
\end{table*}

\begin{table*}
\centering
\begin{tabular}{@{}lrrrrrr@{}}
    \toprule
    & \#Benchmarks & SMPMC (New!) & \texttt{\texttt{PAYNT-AR}} & \texttt{PAYNT-CEGIS} & \texttt{PAYNT-Hybrid} & \texttt{SMT(LRA)} \\
    \midrule
    \textbf{C1} & 112 & \textbf{77 (53)} & N/A & N/A & N/A & 24 (0) \\
    \textbf{C2} & 72 & \textbf{54 ~~(1)} & N/A & \textbf{54 ~(1)} & N/A & 31 (0) \\
    \textbf{C3} & 82 & \textbf{43 (16)} & \textbf{43 (16)} & N/A & N/A & 7 (0) \\
    \textbf{C4} & 106 & \textbf{88} ~~(5) & 86 ~~(\textbf{8}) & 24\({}^{*}\) (0) & 23\({}^{*}\) (0) & 30 (0) \\
    \(\sum\)    & 372  & \textbf{262 (75)} & 129 (24) & 78 ~(1) & 23 ~(0) & 92 (0) \\
    \bottomrule
\end{tabular}
\caption{The total number of solved benchmarks with the 30-minute timeout for each benchmark. The numbers in parentheses show the number of benchmarks that no other tool has solved. N/A indicates that the method does not support the problem class. \({}^{*}\)Due to limitations in counterexample generalization, 57 benchmarks are not supported by \texttt{PAYNT-CEGIS}, \texttt{Paynt-Hybrid}.}
\label{tab:results}
\end{table*}

\paragraph{Research questions.} We focus on the following questions:
\begin{compactenum}
    \item[Q1:] \emph{Can SMPMC solve a class of synthesis problems that goes beyond the capabilities of existing tools?}

    \item[Q2:] \emph{Is SMPMC competitive with 
    state-of-the-art methods  
    for
various fragments of the synthesis problem?}
\end{compactenum}

\paragraph{SMPMC implementation.}
We implemented SMPMC on top of the probabilistic model checker Storm~\cite{DBLP:journals/sttt/HenselJKQV22} and the SMT solver Z3~\cite{DBLP:conf/tacas/MouraB08}, leveraging Z3's user propagator API~\cite{DBLP:conf/vmcai/BjornerEK23}. 
We use PAYNT~\cite{DBLP:journals/jair/AndriushchenkoCMJK25} to parse the input. Appendix A gives details.

\paragraph{Benchmarks.}
To systematically answer the research questions, we define four classes \Ca--\Cd{} of synthesis problems, see \cref{tab:configurations}. They represent different, NP-hard, 
fragments of the problem statement. The classes \Cb--\Cd{} are special cases due to
additional assumptions on the constraint \(\tau_{\mathbb{V}}\) and the uncontrollable parameters \(Y\). 
We consider 372 benchmarks from different domains: 
(1)~sets of MCs, given by parametrized finite-state probabilistic programs, mostly from ~\cite{DBLP:journals/jair/AndriushchenkoCMJK25}, 
(2)~synthesis of fixed-memory finite-state controllers (FSCs) for POMDPs from~\cite{andriushchenko23search},
(3)~multiple-environment (ME) MDPs from \cite{DBLP:conf/atva/AndriushchenkoCJM24,DBLP:conf/tacas/VegtJJ23}
that model variations of a system in different environments~\cite{DBLP:conf/aaai/ChadesCMNSB12},
and
(4)~ME-POMDPs from~\cite{hmpomdps}. Finally, (5)~we extended some standard POMDP from (2) to ME-POMDPs.
These benchmarks were translated to colored MDPs akin to \cite{DBLP:journals/jair/AndriushchenkoCMJK25}.
Our benchmarks include a similar number of satisfiable and unsatisfiable instances.

\paragraph{Setup.} All experiments ran on a single core of an AMD~Ryzen~TRP~5965WX with a 30-minute timeout and 16GB of available memory. The considered baselines are explained in the \emph{baselines} paragraphs in Q1 and Q2 sections. All tools use optimistic value iteration \cite{DBLP:conf/cav/HartmannsK20} with a precision of \(10^{-4}\) for PMC. We ran each experiment three times and confirmed that \cref{tab:results} is the same. All of the considered methods are deterministic and have insignificant variance in timing on fixed hardware.

\begin{mdframed}[style=MyFrame,nobreak=true]
\textbf{Main result:}
\cref{tab:results} reports the total number of solved benchmarks across all considered methods and problem classes.  In \Ca, only SMPMC solves a significant part of the benchmarks, and for all other classes, it is competitive and solves some previously unsolvable instances. 
\end{mdframed}
Below, we detail the results in the context of Q1 and Q2.

\subsection*{Q1: Going beyond capabilities of existing tools}

\textbf{Setting.} Inspired by the recent efforts to find robust policies~\cite{hmpomdps} and interpretable policies represented by small decision trees~\cite{DBLP:conf/ijcai/VosV23}, we combine both these requirements and seek for robust decision trees (DTs) in various ME-(PO)MDP benchmarks. We formulate an SMT constraint \(\tau_{\mathbb{V}}\) that specifies that the policy can be represented by a DT with \(n\) nodes (the encoding is described in Appendix D.1).
We consider  \(n \in \{1,3, \ldots, 15\}\) for all benchmarks.

\smallskip
\noindent
\textbf{Baseline.} To our best knowledge, there is no tool that is able to solve \Ca~problems. Therefore, we implemented a baseline, further denoted as \texttt{SMT(LRA)}, that supports these problems by constructing a monolithic SMT query with \emph{linear real arithmetic} and running Z3.%

\smallskip
\noindent \textbf{Results.} 
\cref{tab:results} shows that SMPMC clearly outperforms the \texttt{SMT(LRA)} baseline: It solves all instances that \texttt{SMT(LRA)} does and it additionally solves 53 benchmarks that are unfeasible for \texttt{SMT(LRA)}. More detailed results presented in~Appendix D.1 show that \texttt{SMT(LRA)} mostly solves only simple instances of the problems and, on average, SMPMC can solve these instances 20x faster. Moreover, SMPMC is able to solve some very challenging problems, e.g., a benchmark where the robust minimal DT has 15 nodes.
In many cases, the underlying colored MDP has above 20K states and the number of unconstrained assignments goes beyond $10^{100}$ (see~Appendix D.3).

\paragraph{Demonstration.} 
We now describe the results for the \emph{Rocks-6-4} ME-MDP taken from \cite{DBLP:conf/atva/AndriushchenkoCJM24}. The goal of the agent is to collect four rocks on a 6x6 grid, where the environment encodes the position of the rocks. While the authors acknowledged that \enquote{there exists a single policy that is winning for all MDPs}, their algorithm that searches for unconstrained memoryless policies using a game-based abstraction and various heuristics cannot find it. 
In contrast, SMPMC finds a minimal 5-node DT that describes a robust policy, shown in \cref{fig:rockstree}.
The policy visits all grid cells, thereby ensuring it collects all rocks. It achieves this in an interesting way: in the \emph{Rocks-6-4} model, the agent slips to the left of the current movement direction with probability 10\%. The tree uses this to create a \emph{spiral} effect, visiting all cells eventually, as visualized in \cref{fig:rockgrid}.

\paragraph{Summary Q1.} The results obtained for the synthesis problems in \Ca~demonstrate that SMPMC solves a class of problems that
goes beyond the capabilities of existing tools.

\forestset{
    default preamble = {
        for tree={draw, s sep = 5 mm, edge={thick, draw=gray}, draw=gray, baseline},
        leaf/.style={
            draw, circle,
            minimum size=0pt,
            inner sep=1.5pt,
        }
    }
}

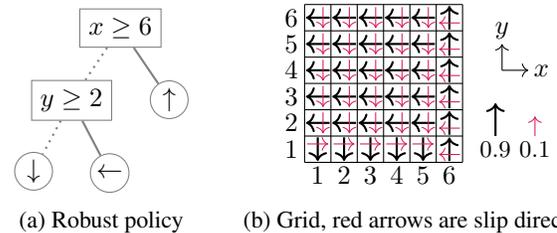
\begin{figure}
    \centering
    \begin{subfigure}[b]{0.4\linewidth}
        \centering
        \begin{forest}
            [$x \geq 6$ [$y \geq 2$, edge={dotted} [$\downarrow$, leaf, edge={dotted}] [$\leftarrow$, leaf]] [$\uparrow$, leaf]]
        \end{forest}
        \caption{Robust policy}
        \label{fig:rockstree}
    \end{subfigure}
    \begin{subfigure}[b]{0.59\linewidth}
        \centering
        \begin{tikzpicture}[x=0.35cm, y=0.35cm]

\foreach \x in {0.5,1.5,...,6.5} {
  \draw (\x,0.5) -- (\x,6.5);
}
\foreach \y in {0.5,1.5,...,6.5} {
  \draw (0.5,\y) -- (6.5,\y);
}

\node[rotate=0] at (1,0) {$1$};
\node[rotate=0] at (2,0) {$2$};
\node[rotate=0] at (3,0) {$3$};
\node[rotate=0] at (4,0) {$4$};
\node[rotate=0] at (5,0) {$5$};
\node[rotate=0] at (6,0) {$6$};

\node at (-0,1) {$1$};
\node at (-0,2) {$2$};
\node at (-0,3) {$3$};
\node at (-0,4) {$4$};
\node at (-0,5) {$5$};
\node at (-0,6) {$6$};

\draw[->, black] (8,4) -- (8,5);
\node at (8,5.5) {$y$};
\draw[->, black] (8,4) -- (9,4);
\node at (9.5,4) {$x$};

\draw[->, thick, black] (7.75,1.5) -- (7.75,2.75);
\node at (7.75,1) {\footnotesize$0.9$};
\draw[->, ruby] (9.25,1.5) -- (9.25,2.25);
\node at (9.25,1) {\footnotesize$0.1$};

\foreach \i/\row in {
  1/{d,d,d,d,d,u},
  2/{l,l,l,l,l,u},
  3/{l,l,l,l,l,u},
  4/{l,l,l,l,l,u},
  5/{l,l,l,l,l,u},
  6/{l,l,l,l,l,u},
}{
  \foreach \j [count=\x from 1] in \row {
        \coordinate (C) at (\x,\i);
      \if\j u
      \coordinate (A) at (0,0.45);
      \coordinate (A2) at (0,0.2);
      \coordinate (L) at (-0.4,0);
    \else\if\j d
      \coordinate (A) at (0,-0.45);
      \coordinate (A2) at (0,-0.2);
      \coordinate (L) at (0.4,0);
    \else\if\j r
      \coordinate (A) at (0.45,0);
      \coordinate (A2) at (0.2,0);
      \coordinate (L) at (0,0.4);
    \else
      \coordinate (A) at (-0.45,0);
      \coordinate (A2) at (-0.2,0);
      \coordinate (L) at (0,-0.4);
    \fi\fi\fi

    \draw[->, thick, black] ($(C)-(A)$) -- ($(C)+(A)$);
    \draw[->, ruby] ($(C)-(L)-(A2)$) -- ($(C)+(L)-(A2)$);

  }
}
\end{tikzpicture}
        \caption{Grid, red arrows are slip direction}
        \label{fig:rockgrid}
    \end{subfigure}
    \caption{Minimal decision tree for \emph{Rocks-6-4}, by SMPMC.}
\end{figure}

\subsection*{Q2: Comparison on problem fragments}

\paragraph{Baselines.}
Besides the monolithic \texttt{SMT(LRA)} baseline used in Q1, more baselines are applicable to Q2.
We compare against the state-of-the-art synthesis tool PAYNT \cite{DBLP:journals/jair/AndriushchenkoCMJK25} as it solves the problems \Cb{} and \Cd{}, also on colored MDPs. PAYNT implements three strategies: (1)~abstraction-refinement (AR), (2)~counterexample-guided inductive synthesis (CEGIS) and (3)~Hybrid, combining AR and CEGIS. These strategies %
have been optimized towards the benchmarks in \Cd. \texttt{PAYNT-CEGIS} was specifically developed for problems in \Cb. We have mildly extended \texttt{\texttt{PAYNT-AR}} to support the problems in \Cc~by adding 
an inner (non-naive) 1-by-1 evaluation.
See Appendix D.3 for details and a comparison with an alternative \texttt{PAYNT-AR} extension.
Some dedicated approaches support subsets or variations of the problems defined in the classes \Cb-\Cd\, but the differences prevent a direct performance comparison. We discuss these approaches in the related work.

\paragraph{\Cb: Constrained Synthesis.}
Our benchmarks include two application domains:
1) In a set of MCs, find an MC that satisfies a given bound on the implementation cost~\cite{vcevska2021counterexample}.
2)~For a given POMDP, find a DT with $n$ nodes encoding a satisfying memoryless policy. We again consider \(n \in \{1, \ldots, 15\}\).
This benchmark set includes, e.g., the \emph{"refuel-06"} POMDP where the minimal decision tree has 11 nodes. 
 \cref{tab:results} shows that SMPMC is on par with \texttt{PAYNT-CEGIS} and the \texttt{SMT(LRA)} baseline significantly lags behind. The results in Fig.\ 9 (Appendix D.2) show that there is no clear winner in terms of runtimes.

\paragraph{\Cc: Robust Synthesis.}
These benchmarks include synthesis of robust (finite-state) policies in ME-(PO)MDPs. Additionally, we consider sets of MCs with for-all quantified variables.
\cref{tab:results} shows that SMPMC is on par with the extended version \texttt{\texttt{PAYNT-AR}}: Both methods solve the same number of benchmarks. Interestingly, there are 16 benchmarks solved only by SMPMC and 16 different benchmarks solved only by \texttt{\texttt{PAYNT-AR}}.
Results in Fig.\ 10 indicate that on the benchmarks solved by both tools, SMPMC is generally faster. \texttt{SMT(LRA)} again significantly lags behind.

\paragraph{\Cd: Plain Synthesis.}
We consider benchmarks 
including (1)~synthesis of a satisfying fixed-memory FSCs and 2) synthesis of a satisfying MC in a given set of MCs. We also include benchmarks with over 1 million states to showcase that the size of the state space on its own does not pose problems for SMPMC.
\begin{figure}
\centering
    \pgfplotsset{
        width=1\linewidth,
        height=4.53cm
    }
    \renewcommand{\pathtostuff}{experiments/simple-2025-07-29_21-15-45/}\begin{tikzpicture}
\begin{semilogyaxis}[
        /pgfplots/table/header=false,
        xlabel={\small solved instances},
        ylabel={\small CPU time (\second)},
        label style={font=\small},
        xlabel style={at={(ticklabel cs:0.5,-2.0)},anchor=near ticklabel},
        ylabel style={at={(ticklabel cs:0.5,-2.0)},anchor=near ticklabel},
        xmin=0,
        ymin=0.2,
        ymax=2000,
        mark repeat=500,
        cycle multiindex* list={
            Dark2 \nextlist
            mark list
        },
        mark repeat=10,    %
        mark size=2pt,     %
        legend style={nodes={scale=0.5, transform shape}},
        legend entries={PAYNT-AR,PAYNT-CEGIS,PAYNT-Hybrid,SMPMC,SMT(LRA),},
        every axis legend/.append style={at={(1,0)}, anchor=south east, outer xsep=2pt, outer ysep=2pt,},
        xticklabel style={font=\small},
        yticklabel style={font=\small},
        ]
        \foreach \tool in {comparison_simple.2025-07-29_21-15-45.results.PAYNT-AR.xml.bz2.quantile.csv,comparison_simple.2025-07-29_21-15-45.results.PAYNT-CEGIS.xml.bz2.quantile.csv,comparison_simple.2025-07-29_21-15-45.results.PAYNT-Hybrid.xml.bz2.quantile.csv,comparison_simple.2025-07-29_21-15-45.results.SMPMC.xml.bz2.quantile.csv,comparison_simple.2025-07-29_21-15-45.results.SMTLRA.xml.bz2.quantile.csv} {
            \addplot+ [line width=1.5pt] table[y index=5] {\pathtostuff\tool};
        }
\end{semilogyaxis}
\end{tikzpicture}
    \caption{Cactus Plot for \Cd. A method's line going through a point \((x, y)\) means that the method solved the \(x\)th benchmark, sorted by time, within \(y\) seconds (log-scale).}
    \label{fig:quantilefsc}
\end{figure}
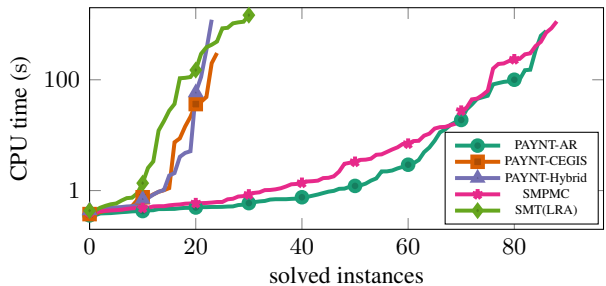
\cref{tab:results} shows that SMPMC is on par with \texttt{\texttt{PAYNT-AR}}, which is currently the fastest algorithm for these problems~\cite{DBLP:journals/jair/AndriushchenkoCMJK25}: SMPMC solves two more benchmarks, but \texttt{\texttt{PAYNT-AR}} has more unique solutions. Additionally, the results in \cref{fig:quantilefsc} show that SMPMC is competitive with \texttt{\texttt{PAYNT-AR}} also in terms of runtime. The \texttt{SMT(LRA)}, \texttt{PAYNT-CEGIS} and \texttt{PAYNT-Hybrid} lag significantly behind. 

\paragraph{Summary Q2.} The results obtained for the synthesis problems in \Cb-\Cd~demonstrate that SMPMC is highly competitive with state-of-the-art tools 
for various widely studied fragments of the problem statement. In contrast to these tools, SMPMC does not sacrifice generality and achieves efficiency without implementing any heuristics that would leverage the structure of the subproblems or additional information from model checking, which opens an interesting avenue for improving the performance of our approach.

\section{Related Work}
Closest to this work are the approaches in PAYNT, see the introduction and the experimental evaluation for details.

\paragraph{SMT and MILPs.}
This paper integrates SAT with PMC and thus stands in a tradition of approaches that integrate SAT-style reasoning with dedicated engines, such as mixed-integer (linear) programming \cite{DBLP:conf/cp/Hooker04,DBLP:journals/disopt/Achterberg07}. CDCL(T)-based approaches like ours exist for theories ranging from convex optimization \cite{DBLP:conf/hybrid/ShoukryNSSPT17} to logics on Kripke structures \cite{DBLP:conf/tableaux/EisenhoferARK23}. Planning Modulo Theories integrates planning tasks into SMT, enabling the use of theories \cite{DBLP:conf/ijcai/BofillEV17}. This paper leverages MBQI for quantifiers, whereas another quantifier instantiation strategy is the syntax-based E-matching \cite{DBLP:conf/cade/MouraB07}. MBQI has recently been integrated with syntax-based strategies
\cite{modelbasedfastenum}.

\paragraph{Controllers for (ME-)POMDPs.}
There exist several orthogonal approaches for controller synthesis in POMDPs, including various (monolithic) MILP formulations~\cite{DBLP:conf/aaai/AmatoBZ10,kumar2015history} or analysing the belief space~\cite{DBLP:conf/rss/KurniawatiHL08,DBLP:conf/uai/HoFRSL24}.
Such methods can be effectively integrated with FSC synthesis~\cite{andriushchenko23search}. Recently, a lot of focus has been given to finding robust controllers 
for multiple-environment (PO-)MDPs including sequential convex programming~\cite{DBLP:conf/aaai/Cubuktepe0JMST21}, value iteration methods~\cite{DBLP:journals/siamjo/NakaoJS21}, belief space analysis~\cite{DBLP:conf/aips/ChatterjeeCK0R20} or policy optimisation~\cite{DBLP:conf/icml/LinXD024,hmpomdps}.

\paragraph{DT policies for MDPs.}
The existing tools for finding DTs either construct a DT from a given (optimal) policy~\cite{DBLP:conf/tacas/AshokJKWWY21} or consider a formulation in which only minimality with respect to the depth of the tree is  
encoded~\cite{DBLP:conf/ijcai/VosV23,andriushchenko2025small}. The DT encoding considered in this work was inspired by the encoding using propositional formulas~\cite{DBLP:conf/ijcai/NarodytskaIPM18}. Outside of the world of MDPs, finding minimal-sized DTs from training data has been explored~\cite{DBLP:conf/aaai/StausKSS25}.

\section{Conclusion}
This paper presents SMPMC: \emph{satisfiability modulo probabilistic model checking}, which tightly integrates the power of satisfiability solvers and model checkers. Among various applications, this approach is the first effective method to find decision tree policies that robustly solve a set of MDPs.

\section*{Acknowledgements}

The authors are grateful to Clemens Eisenhofer for his help with Z3 and its user propagator API. \inlinegraphics{eu_logo} This work has been executed under the project VASSAL: ``Verification and Analysis for Safety and Security of Applications in Life'' funded by the European Union under Horizon Europe WIDERA Coordination and Support Action/Grant Agreement No. 10116002. This work has been funded by the NWO VENI Grant ProMiSe (222.147) and the Czech Science Foundation grant \mbox{GA23-06963S} (VESCAA).

\nocite{DBLP:journals/fac/ChrszonDKB18}
\nocite{DBLP:conf/atva/AndriushchenkoCJM24}
\nocite{DBLP:conf/vmcai/BjornerEK23}

\bibliography{paper}

\begin{thebibliography}{43}
\providecommand{\natexlab}[1]{#1}

\bibitem[{Achterberg(2007)}]{DBLP:journals/disopt/Achterberg07}
Achterberg, T. 2007.
\newblock Conflict analysis in mixed integer programming.
\newblock \emph{Discret. Optim.}, 4(1): 4--20.

\bibitem[{Amato, Bonet, and Zilberstein(2010)}]{DBLP:conf/aaai/AmatoBZ10}
Amato, C.; Bonet, B.; and Zilberstein, S. 2010.
\newblock Finite-State Controllers Based on Mealy Machines for Centralized and
  Decentralized POMDPs.
\newblock In \emph{{AAAI}}, 1052--1058. {AAAI} Press.

\bibitem[{Andriushchenko et~al.(2023)Andriushchenko, Bork, {\v{C}}e{\v{s}}ka,
  Junges, Katoen, and Mac{\'a}k}]{andriushchenko23search}
Andriushchenko, R.; Bork, A.; {\v{C}}e{\v{s}}ka, M.; Junges, S.; Katoen,
  {\relax J{-}P}.; and Mac{\'a}k, F. 2023.
\newblock Search and Explore: Symbiotic Policy Synthesis in {POMDPs}.
\newblock In \emph{CAV}.
\newblock ISBN 978-3-031-37709-9.

\bibitem[{Andriushchenko et~al.(2024)Andriushchenko, {\v{C}}e{\v{s}}ka, Junges,
  and Mac{\'{a}}k}]{DBLP:conf/atva/AndriushchenkoCJM24}
Andriushchenko, R.; {\v{C}}e{\v{s}}ka, M.; Junges, S.; and Mac{\'{a}}k, F.
  2024.
\newblock Policies Grow on Trees: Model Checking Families of MDPs.
\newblock In \emph{{ATVA} {(2)}}, volume 15055 of \emph{{LNCS}}, 51--75.
  Springer.

\bibitem[{Andriushchenko et~al.(2025{\natexlab{a}})Andriushchenko,
  {\v{C}}e{\v{s}}ka, Junges, and Mac{\'a}k}]{andriushchenko2025small}
Andriushchenko, R.; {\v{C}}e{\v{s}}ka, M.; Junges, S.; and Mac{\'a}k, F.
  2025{\natexlab{a}}.
\newblock Small Decision Trees for MDPs with Deductive Synthesis.
\newblock In \emph{{CAV}}, 169--192. Springer.

\bibitem[{Andriushchenko et~al.(2025{\natexlab{b}})Andriushchenko,
  {\v{C}}e{\v{s}}ka, Mac{\'{a}}k, Junges, and
  Katoen}]{DBLP:journals/jair/AndriushchenkoCMJK25}
Andriushchenko, R.; {\v{C}}e{\v{s}}ka, M.; Mac{\'{a}}k, F.; Junges, S.; and
  Katoen, {\relax J{-}P}. 2025{\natexlab{b}}.
\newblock An Oracle-Guided Approach to Constrained Policy Synthesis Under
  Uncertainty.
\newblock \emph{J. Artif. Intell. Res.}, 82: 433--469.

\bibitem[{Ashok et~al.(2021)Ashok, Jackermeier, K{\v{r}}et{\'{\i}}nsk{\'{y}},
  Weinhuber, Weininger, and Yadav}]{DBLP:conf/tacas/AshokJKWWY21}
Ashok, P.; Jackermeier, M.; K{\v{r}}et{\'{\i}}nsk{\'{y}}, J.; Weinhuber, C.;
  Weininger, M.; and Yadav, M. 2021.
\newblock {dtControl} 2.0: Explainable Strategy Representation via Decision
  Tree Learning Steered by Experts.
\newblock In \emph{{TACAS}}.

\bibitem[{{Bayardo Jr.} and Schrag(1997)}]{DBLP:conf/aaai/BayardoS97}
{Bayardo Jr.}, R.~J.; and Schrag, R. 1997.
\newblock Using {CSP} Look-Back Techniques to Solve Real-World {SAT} Instances.
\newblock In \emph{{AAAI/IAAI}}, 203--208. {AAAI} Press / The {MIT} Press.

\bibitem[{Biere et~al.(2009)Biere, Heule, van Maaren, and
  Walsh}]{DBLP:series/faia/2009-185}
Biere, A.; Heule, M.; van Maaren, H.; and Walsh, T., eds. 2009.
\newblock \emph{Handbook of Satisfiability}, volume 185 of \emph{Frontiers in
  Artificial Intelligence and Applications}.
\newblock {IOS} Press.

\bibitem[{Bj{\o}rner, Eisenhofer, and
  Kov{\'{a}}cs(2023)}]{DBLP:conf/vmcai/BjornerEK23}
Bj{\o}rner, N.~S.; Eisenhofer, C.; and Kov{\'{a}}cs, L. 2023.
\newblock Satisfiability Modulo Custom Theories in {Z3}.
\newblock In \emph{{VMCAI}}, volume 13881 of \emph{{LNCS}}, 91--105. Springer.

\bibitem[{Bofill, Espasa, and Villaret(2017)}]{DBLP:conf/ijcai/BofillEV17}
Bofill, M.; Espasa, J.; and Villaret, M. 2017.
\newblock Relaxed Exists-Step Plans in Planning as {SMT}.
\newblock In \emph{{IJCAI}}, 563--570. ijcai.org.

\bibitem[{{\v{C}}e{\v{s}}ka et~al.(2021){\v{C}}e{\v{s}}ka, Hensel, Junges, and
  Katoen}]{vcevska2021counterexample}
{\v{C}}e{\v{s}}ka, M.; Hensel, C.; Junges, S.; and Katoen, {\relax J{-}P}.
  2021.
\newblock Counterexample-guided inductive synthesis for probabilistic systems.
\newblock \emph{Formal Aspects of Computing}, 33(4): 637--667.

\bibitem[{Chades et~al.(2012)Chades, Carwardine, Martin, Nicol, Sabbadin, and
  Buffet}]{DBLP:conf/aaai/ChadesCMNSB12}
Chades, I.; Carwardine, J.; Martin, T.~G.; Nicol, S.; Sabbadin, R.; and Buffet,
  O. 2012.
\newblock {MOMDPs}: {A} Solution for Modelling Adaptive Management Problems.
\newblock In \emph{{AAAI}}. {AAAI} Press.

\bibitem[{Chatterjee et~al.(2020)Chatterjee, Chmel{\'{\i}}k, Karkhanis,
  Novotn{\'{y}}, and Royer}]{DBLP:conf/aips/ChatterjeeCK0R20}
Chatterjee, K.; Chmel{\'{\i}}k, M.; Karkhanis, D.; Novotn{\'{y}}, P.; and
  Royer, A. 2020.
\newblock Multiple-Environment {Markov} Decision Processes: Efficient Analysis
  and Applications.
\newblock In \emph{{ICAPS}}, 48--56. {AAAI} Press.

\bibitem[{Chatterjee, Doyen, and
  Henzinger(2010)}]{DBLP:conf/mfcs/ChatterjeeDH10}
Chatterjee, K.; Doyen, L.; and Henzinger, T.~A. 2010.
\newblock Qualitative Analysis of Partially-Observable Markov Decision
  Processes.
\newblock In \emph{MFCS}, volume 6281 of \emph{Lecture Notes in Computer
  Science}, 258--269. Springer.

\bibitem[{Chrszon et~al.(2018)Chrszon, Dubslaff, Kl{\"{u}}ppelholz, and
  Baier}]{DBLP:journals/fac/ChrszonDKB18}
Chrszon, P.; Dubslaff, C.; Kl{\"{u}}ppelholz, S.; and Baier, C. 2018.
\newblock ProFeat: feature-oriented engineering for family-based probabilistic
  model checking.
\newblock \emph{Formal Aspects Comput.}, 30(1): 45--75.

\bibitem[{Cubuktepe et~al.(2021)Cubuktepe, Jansen, Junges, Marandi, Suilen, and
  Topcu}]{DBLP:conf/aaai/Cubuktepe0JMST21}
Cubuktepe, M.; Jansen, N.; Junges, S.; Marandi, A.; Suilen, M.; and Topcu, U.
  2021.
\newblock Robust Finite-State Controllers for Uncertain {POMDP}s.
\newblock In \emph{{AAAI}}, 11792--11800. {AAAI} Press.

\bibitem[{de~Moura and Bj{\o}rner(2007)}]{DBLP:conf/cade/MouraB07}
de~Moura, L.~M.; and Bj{\o}rner, N.~S. 2007.
\newblock Efficient E-Matching for {SMT} Solvers.
\newblock In \emph{{CADE}}, volume 4603 of \emph{Lecture Notes in Computer
  Science}, 183--198. Springer.

\bibitem[{de~Moura and Bj{\o}rner(2008)}]{DBLP:conf/tacas/MouraB08}
de~Moura, L.~M.; and Bj{\o}rner, N.~S. 2008.
\newblock {Z3:} An Efficient {SMT} Solver.
\newblock In \emph{{TACAS}}, volume 4963 of \emph{{LNCS}}, 337--340. Springer.

\bibitem[{Eisenhofer et~al.(2023)Eisenhofer, Alassaf, Rawson, and
  Kov{\'{a}}cs}]{DBLP:conf/tableaux/EisenhoferARK23}
Eisenhofer, C.; Alassaf, R.; Rawson, M.; and Kov{\'{a}}cs, L. 2023.
\newblock Non-Classical Logics in Satisfiability Modulo Theories.
\newblock In \emph{{TABLEAUX}}, volume 14278 of \emph{Lecture Notes in Computer
  Science}, 24--36. Springer.

\bibitem[{Galesloot et~al.(2025)Galesloot, Andriushchenko, Ceska, Junges, and
  Jansen}]{hmpomdps}
Galesloot, M. F.~L.; Andriushchenko, R.; Ceska, M.; Junges, S.; and Jansen, N.
  2025.
\newblock Robust Finite-Memory Policy Gradients for Hidden-Model POMDPs.
\newblock In \emph{{IJCAI}}, 8518--8526. ijcai.org.

\bibitem[{Ge and de~Moura(2009)}]{DBLP:conf/cav/GeM09}
Ge, Y.; and de~Moura, L.~M. 2009.
\newblock Complete Instantiation for Quantified Formulas in Satisfiabiliby
  Modulo Theories.
\newblock In \emph{{CAV}}, volume 5643 of \emph{{LNCS}}, 306--320. Springer.

\bibitem[{Ghezzi and Sharifloo(2013)}]{DBLP:journals/infsof/GhezziS13}
Ghezzi, C.; and Sharifloo, A.~M. 2013.
\newblock Model-based verification of quantitative non-functional properties
  for software product lines.
\newblock \emph{Inf. Softw. Technol.}, 55(3): 508--524.

\bibitem[{Hartmanns et~al.(2023)Hartmanns, Junges, Quatmann, and
  Weininger}]{DBLP:conf/tacas/HartmannsJQW23}
Hartmanns, A.; Junges, S.; Quatmann, T.; and Weininger, M. 2023.
\newblock A Practitioner's Guide to {MDP} Model Checking Algorithms.
\newblock In \emph{{TACAS} {(1)}}, volume 13993 of \emph{{LNCS}}, 469--488.
  Springer.

\bibitem[{Hartmanns and Kaminski(2020)}]{DBLP:conf/cav/HartmannsK20}
Hartmanns, A.; and Kaminski, B.~L. 2020.
\newblock Optimistic Value Iteration.
\newblock In \emph{{CAV} {(2)}}, volume 12225 of \emph{{LNCS}}, 488--511.
  Springer.

\bibitem[{Hensel et~al.(2022)Hensel, Junges, Katoen, Quatmann, and
  Volk}]{DBLP:journals/sttt/HenselJKQV22}
Hensel, C.; Junges, S.; Katoen, {\relax J{-}P}.; Quatmann, T.; and Volk, M.
  2022.
\newblock The probabilistic model checker Storm.
\newblock \emph{Int. J. Softw. Tools Technol. Transf.}, 24(4): 589--610.

\bibitem[{Ho et~al.(2024)Ho, Feather, Rossi, Sunberg, and
  Lahijanian}]{DBLP:conf/uai/HoFRSL24}
Ho, Q.~H.; Feather, M.~S.; Rossi, F.; Sunberg, Z.; and Lahijanian, M. 2024.
\newblock Sound Heuristic Search Value Iteration for Undiscounted POMDPs with
  Reachability Objectives.
\newblock In \emph{UAI}, volume 244 of \emph{Proceedings of Machine Learning
  Research}, 1681--1697. {PMLR}.

\bibitem[{Hooker(2004)}]{DBLP:conf/cp/Hooker04}
Hooker, J.~N. 2004.
\newblock A Hybrid Method for Planning and Scheduling.
\newblock In Wallace, M., ed., \emph{Principles and Practice of Constraint
  Programming - {CP} 2004, 10th International Conference, {CP} 2004, Toronto,
  Canada, September 27 - October 1, 2004, Proceedings}, volume 3258 of
  \emph{Lecture Notes in Computer Science}, 305--316. Springer.

\bibitem[{Kondylidou, Reynolds, and Blanchette(2025)}]{modelbasedfastenum}
Kondylidou, L.; Reynolds, A.; and Blanchette, J. 2025.
\newblock Augmenting Model-Based Instantiation with Fast Enumeration.
\newblock In \emph{TACAS}, 85--103. Springer.
\newblock ISBN 978-3-031-90643-5.

\bibitem[{Kumar and Zilberstein(2015{\natexlab{a}})}]{DBLP:conf/aips/KumarZ15}
Kumar, A.; and Zilberstein, S. 2015{\natexlab{a}}.
\newblock History-Based Controller Design and Optimization for Partially
  Observable MDPs.
\newblock In \emph{ICAPS}, 156--164. {AAAI} Press.

\bibitem[{Kumar and Zilberstein(2015{\natexlab{b}})}]{kumar2015history}
Kumar, A.; and Zilberstein, S. 2015{\natexlab{b}}.
\newblock History-based controller design and optimization for partially
  observable {MDP}s.
\newblock In \emph{ICAPS}, volume~25, 156--164.

\bibitem[{Kurniawati, Hsu, and Lee(2008)}]{DBLP:conf/rss/KurniawatiHL08}
Kurniawati, H.; Hsu, D.; and Lee, W.~S. 2008.
\newblock {SARSOP:} Efficient Point-Based {POMDP} Planning by Approximating
  Optimally Reachable Belief Spaces.
\newblock In \emph{Robotics: Science and Systems IV}. The {MIT} Press.

\bibitem[{Lin et~al.(2024)Lin, Xue, Deng, and Ye}]{DBLP:conf/icml/LinXD024}
Lin, Z.; Xue, C.; Deng, Q.; and Ye, Y. 2024.
\newblock A Single-Loop Robust Policy Gradient Method for Robust {Markov}
  Decision Processes.
\newblock In \emph{{ICML}}. OpenReview.net.

\bibitem[{Nakao, Jiang, and Shen(2021)}]{DBLP:journals/siamjo/NakaoJS21}
Nakao, H.; Jiang, R.; and Shen, S. 2021.
\newblock Distributionally Robust Partially Observable {Markov} Decision
  Process with Moment-Based Ambiguity.
\newblock \emph{{SIAM} J. Optim.}, 31(1): 461--488.

\bibitem[{Narodytska et~al.(2018)Narodytska, Ignatiev, Pereira, and
  Marques{-}Silva}]{DBLP:conf/ijcai/NarodytskaIPM18}
Narodytska, N.; Ignatiev, A.; Pereira, F.; and Marques{-}Silva, J. 2018.
\newblock Learning Optimal Decision Trees with {SAT}.
\newblock In \emph{IJCAI}, 1362--1368. ijcai.org.

\bibitem[{Nieuwenhuis, Oliveras, and
  Tinelli(2006)}]{DBLP:journals/jacm/NieuwenhuisOT06}
Nieuwenhuis, R.; Oliveras, A.; and Tinelli, C. 2006.
\newblock Solving {SAT} and {SAT} Modulo Theories: From an abstract
  Davis--Putnam--Logemann--Loveland procedure to DPLL(\emph{T}).
\newblock \emph{J. {ACM}}, 53(6): 937--977.

\bibitem[{Puterman(1994)}]{Put94}
Puterman, M.~L. 1994.
\newblock \emph{Markov Decision Processes: Discrete Stochastic Dynamic
  Programming}.
\newblock Wiley Series in Probability and Statistics. Wiley.

\bibitem[{Shoukry et~al.(2017)Shoukry, Nuzzo, Sangiovanni{-}Vincentelli,
  Seshia, Pappas, and Tabuada}]{DBLP:conf/hybrid/ShoukryNSSPT17}
Shoukry, Y.; Nuzzo, P.; Sangiovanni{-}Vincentelli, A.~L.; Seshia, S.~A.;
  Pappas, G.~J.; and Tabuada, P. 2017.
\newblock {SMC:} Satisfiability Modulo Convex Optimization.
\newblock In \emph{{HSCC}}, 19--28. {ACM}.

\bibitem[{Silva and Sakallah(1999)}]{DBLP:journals/tc/Marques-SilvaS99}
Silva, J. P.~M.; and Sakallah, K.~A. 1999.
\newblock {GRASP:} {A} Search Algorithm for Propositional Satisfiability.
\newblock \emph{{IEEE} Trans. Computers}, 48(5): 506--521.

\bibitem[{Smallwood and Sondik(1973)}]{smallwood1973optimal}
Smallwood, R.~D.; and Sondik, E.~J. 1973.
\newblock The optimal control of partially observable {M}arkov processes over a
  finite horizon.
\newblock \emph{Oper. Res.}, 21(5): 1071--1088.

\bibitem[{Staus et~al.(2025)Staus, Komusiewicz, Sommer, and
  Sorge}]{DBLP:conf/aaai/StausKSS25}
Staus, L.~P.; Komusiewicz, C.; Sommer, F.; and Sorge, M. 2025.
\newblock Witty: An Efficient Solver for Computing Minimum-Size Decision Trees.
\newblock In \emph{AAAI}, 20584--20591. {AAAI} Press.

\bibitem[{van~der Vegt, Jansen, and Junges(2023)}]{DBLP:conf/tacas/VegtJJ23}
van~der Vegt, M.; Jansen, N.; and Junges, S. 2023.
\newblock Robust Almost-Sure Reachability in Multi-Environment {MDP}s.
\newblock In \emph{{TACAS} {(1)}}, volume 13993 of \emph{{LNCS}}, 508--526.
  Springer.

\bibitem[{Vos and Verwer(2023)}]{DBLP:conf/ijcai/VosV23}
Vos, D.; and Verwer, S. 2023.
\newblock Optimal Decision Tree Policies for Markov Decision Processes.
\newblock In \emph{{IJCAI}}, 5457--5465. ijcai.org.

\end{thebibliography}

\clearpage

\newpage

\begingroup
\appendix
\clearpage
\section{Full Algorithm}
\label{appendix:algorithm}

The full algorithm for our theory solver is in \cref{alg:mdpcheck}. It attempts to prove that a set of literals \(K\) is inconsistent using \cref{prop:inconsistent} and if it is inconsistent, computes \(C(K)\) as defined in \cref{theorem:smallerconflicts} and pushes this back to the SMT solver.

\paragraph{Integration into Z3.}
Z3 calls the user propagator's \textsf{Fixed} callback when assigning some value. It calls the \textsf{Final} callback after deciding on a single assignment. In both cases, the user propagator checks whether a conflict arises using \cref{prop:inconsistent}.
In problems where Z3 uses quantifier instantiation, Z3 calls the \textsf{Fresh} callbacks to create new solver contexts and the \textsf{Created} callback when creating new parametrizations of the \(\viable\) function. The user propagator keeps track of these events to be able to report conflicts in the new context~\cite{DBLP:conf/vmcai/BjornerEK23}.

\paragraph{Implementation details.}

Our implementation of the theory solver corresponds exactly to \cref{alg:mdpcheck} except for two small optimizations:
\begin{itemize}
    \item In the beginning of the solve, we do not compute or report conflicts until the first assignment is visited by the SMT solver. This way, we can give the SMT solver a chance to \enquote{guess} a satisfying assignment immediately, which can sometimes improve runtime.
    \item As SMPMC spends most of its time in probabilistic model checking, we employ a cache of input-output pairs for the theory solver that returns an answer if it was already computed. We use incremental datastructures to amortize computational cost from solving similar problems successively.
\end{itemize}

The implementation uses Storm 1.10.1 and Z3 4.13.3.0.

\begin{algorithm}
\begin{algorithmic}
    \Procedure{CheckMDPs}{$\C, \nu, K$}
        \For{all literals $k \in K$}
            \If{$v = \viable(p_1, p_2, \ldots, p_n)$}
                \State $({\sim}, d) \leftarrow ({<}, {\max})$
            \ElsIf {$v = \lnot \viable(p_1, p_2, \ldots, p_n)$}
                \State $({\sim}, d) \leftarrow ({\geq}, {\min})$
            \Else
                \State \textbf{continue}
            \EndIf
            \State $\eta \leftarrow \{\theta \mid \theta \text{ satisfies } K \setminus \{v\}\}\}$
            \If{$V^d(\C[\eta]) \sim \nu$} \Comment{(\cref{prop:inconsistent})}
                \State $C \leftarrow C(K)$ \Comment{Get conflict set (\cref{theorem:smallerconflicts})}
                \State \Call{PushConflict}{$C \cup \{v\}$}
            \EndIf \Comment{No conflict occured, do nothing}
        \EndFor
    \EndProcedure
\end{algorithmic}
\caption{Theory MDP Check.}
\label{alg:mdpcheck}
\end{algorithm}

\section{Proofs and Extended Example}

\subsection{Extended Example for Theorems}
\label{app:extendedexample}

\begin{figure}
        \centering%
        \begin{tikzpicture}[every state/.style={inner sep=0pt, minimum size=16pt}, node distance=1cm]
            \node[state, fill=yellow!30] (s00) {$s_{0,0}$};
            \node[state, fill=red!30] (s01) [right of=s00] {$s_{0,1}$};
            \node[state, fill=ForestGreen!40] (s02) [right of=s01] {$s_{0,2}$};
            
            \node[state, fill=blue!30] (s10) [above of=s00] {$s_{1,0}$};
            \node[state] (s11) [right of=s10] {$s_{1,1}$};
            \node[state, fill=yellow!30] (s12) [right of=s11] {$s_{1,2}$};
            
            \node[state, fill=blue!30] (s20) [above of=s10] {$s_{2,0}$};
            \node[state, fill=blue!30] (s21) [right of=s20] {$s_{2,1}$};
            \node[state] (s22) [right of=s21] {$\good$};
            
            \path[->] (s00) edge node[left] {$1$} (s10);
            
            \path[->] (s20) edge[loop left] node[left] {$1$} (s21);
            \path[->] (s10) edge[loop left] node[left] {$1$} (s11);
            \path[->] (s21) edge node[above] {$1$} (s20);
            \path[->] (s01) edge[loop below] node[below] {$1$} (s01);
            
            \path[->] (s22) edge[loop right] node[right] {$1$} (s02);
            \path[->] (s12) edge node[right] {$1$} (s22);
            
            \path[->] (s11) edge node[left] {$\nicefrac{1}{4}$} (s21);
            \path[->] (s11) edge node[right] {$\nicefrac{1}{4}$} (s01);
            \path[->] (s11) edge node[above] {$\nicefrac{1}{4}$} (s12);
            \path[->] (s11) edge node[below] {$\nicefrac{1}{4}$} (s10);
            
            \path[->] (s02) edge[loop right] node[right] {$1$} (s22);
            \path[->] (s02) edge[loop below] node[below] {$1$} (s22);
            \path[->] (s02) edge node[right] {$1$} (s12);
            \path[->] (s02) edge node[below] {$1$} (s01);
    \end{tikzpicture}
    \caption{\(\C[\eta]\) in extended example.}
    \label{fig:bugmdp}
\end{figure}
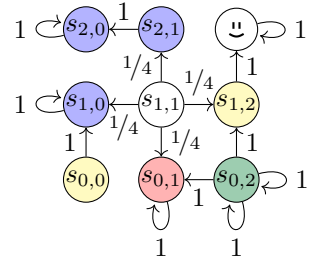

Again, we consider the situation in \cref{fig:bug1} and have $K =\{\viable(d_r,d_g,d_b,d_y), d_y\eq 0,d_r\eq 2,d_b\eq 1\}$. We invoke \cref{prop:inconsistent}:

\inconsistent*

Here, we have:
\begin{itemize}
    \item \(v = \viable(d_r,d_g,d_b,d_y)\),
    \item \(\eta = \{\theta \mid \theta(d_y)\eq 0 , \theta(d_r)\eq 2, \theta(d_b)\eq 1\}\).
\end{itemize}

The MDP \(\C[\eta]\), depicted in \cref{fig:bugmdp}, has a nondeterministic choice at the green state and fixed choices at all other states. Indeed, we have \(V^{\max}(\C[\eta]) = 0 < 1\), as no matter which choice is taken at the green state, the bug will not reach the target. Now we discuss computing the smaller conflict \(C(K)\). Intuitively, the choice of direction on red tiles is also irrelevant, as the bug never reach such a tile. We invoke \cref{theorem:smallerconflicts}:

\smallerconflicts*

Here, we have \(s_I = s_{0,0}\), so the reachable states are \(S = \{s_{0, 0}, s_{1, 0}\}\). Indeed, the reachable colorings are only dependent on parameters \(d_y\) and \(d_b\) (we provide a detailed definition of this in \cref{app:proofs}). This means that we can compute this smaller conflict: \(C(K) = \{\viable(d_r,d_g,d_b,d_y), d_y\eq 0,d_b\eq 1\}\).

\subsection{Proofs}
\label{app:proofs}

\validtheorem*

\begin{proof}
\enquote{\(\Rightarrow\)}: Suppose \(\theta_X\) is such a satisfying assignment. Then the following sentence is valid in \(\mathbb{T}_\C\):
\begin{align*}
    &\forall y_1 \ldots  \forall y_m:\tau_{\mathbb{V}}(x_1/\theta_X(x_1), \ldots, x_n/\theta_X(x_n), y_1, \ldots, y_m)\\ &\rightarrow \; \viable(x_1/\theta_X(x_1), \ldots, x_n/\theta_X(x_n), y_1, \ldots, y_m).
\end{align*}
For any given \(\theta_Y \in \mathbb{Z}^Y\) with \(\tau_{\mathbb{V}}(\theta_X \cup \theta_Y)\), we thus have \(\viable(\theta_X(x_1), \ldots, \theta_X(x_n), \theta_Y(y_1), \ldots, \theta_Y(y_m)) \in \mathbb{T}_\C\), and by definition of \(\mathbb{T}_\C\), we have \(V(\C[\theta_X \cup \theta_Y]) \geq \nu\).

\enquote{\(\Leftarrow\)}: Analogous.

We note that for conciseness, the assumption \(\theta_X \in \mathbb{V}_X\) rules out cases where \(\theta_X \in \mathbb{Z}^X \setminus \mathbb{V}_X\). We can also state the stronger theorem for all \(\theta_X \in \mathbb{Z}^X\) by appending this constraint to \(\Phi\) to yield the following statement:
\begin{align*}
    \Phi' := \; &\exists x_1 \ldots \exists x_n : \; \\
    &\exists y_1 \ldots  \exists y_m: \tau_{\mathbb{V}}(x_1, \ldots, x_n, y_1, \ldots, y_m) \land \\
    &(\forall y_1 \ldots  \forall y_m:\tau_{\mathbb{V}}(x_1, \ldots, x_n, y_1, \ldots, y_m)\\ &\rightarrow \; \viable(x_1, \ldots, x_n, y_1, \ldots, y_m)).
\end{align*}
\end{proof}

\inconsistent*

\begin{proof}
    Suppose \(\viable(p_1, \ldots, p_n) \in K\), let \(\eta\) be defined as above and let \(\C = (\M, \mathbb{V}, \kappa)\) be the colored MDP with \(V^{\max}(\C[\eta]) < \nu\). We will prove that \(K\) is a \conflict{}.
    We first define a new colored MDP \(\C_\eta:=(\C[\eta], \eta, \kappa)\). Intuitively, \(\C_\eta\) is \(\C\) constrained to the new parameter space \(\eta\), so only parameter instantiations admissible by \(\eta\) are represented in \(\C_\eta\). Note that \(\C[\eta'] = \C_\eta[\eta']\) for all \(\eta' \subseteq \eta\).
    We can now state
    \begin{align*}
        & \max_{\theta \in \eta} V(\C[\theta])
        = \max_{\theta \in \eta} V(\C_\eta[\theta]) \\
        \overset{(1)}{\leq} & V^{\max}(\C_\eta[\eta])
        = V^{\max}(\C[\eta])
        \overset{(2)}{<} \nu.
    \end{align*}
    Step \((1)\) is established in  \cite[Theorem 1]{DBLP:journals/jair/AndriushchenkoCMJK25} and step \((2)\) is true by assumption. We thus know that for all \(\theta \in \eta\), we have \(V(\C[\theta]) < \nu\).
    
    Let \(\theta \in \mathbb{V}\). We have two cases:
    \begin{itemize}
    \item
    If \(\theta \notin \eta\), then \(\theta\) does not satisfy \(K \setminus \{v\}\) by definition of \(\eta\), and thus it does not satisfy \(K\).
    \item
    If \(\theta \in \eta\), then we have \(V(\C[\theta]) < \nu\) as shown above, which implies \(\lnot\viable(\theta(p_1), \ldots \theta(p_n)) \in \mathbb{T}_\C\) by definition of \(\mathbb{T}_\C\). As \(\viable(p_1, \ldots, p_n) \in K\), \(\theta\) does not satisfy \(\mathbb{T}_\C \cup K\).
    \end{itemize}
    Because no assignment \(\theta \in \mathbb{V}\) satisfies \(\mathbb{T}_\C \cup K\), \(K\) is a \conflict{}.
    The symmetric case is analogous.
\end{proof}

\smallerconflicts*

We first define (in)dependence and the restricted partial assignment \(C(K)\).

\begin{definition}[Coloring Dependent on Parameter]
We say that a coloring \(\kappa(s, \alpha)\) is \emph{dependent on a parameter} \(p_i \in X \cup Y\) if there exist two assignments \(\theta, \theta' \in \mathbb{V}\), and a \(z \in \mathbb{Z}\) such that:
\begin{itemize}
    \item \(z \neq \theta(p_i)\),
    \item \(\theta'(p_j) = \begin{cases}\theta(p_j) &\text{ if } i \neq j \\ z &\text{ if } i = j\end{cases}\),
    \item \(\theta \in \kappa(s,\alpha)\) and \(\theta' \notin \kappa(s, \alpha)\).
\end{itemize}
Conversely, a coloring \(\kappa(s, \alpha)\) is \emph{independent} of \(p_i\) if it is not dependent on \(p_i\).
\end{definition}

Suppose we have the colored MDP \(\C\), the set of assignments \(\eta\), and \(\C[\eta] = (S, s_I, Act, \mathcal{P}, R)\). Then a \(\C[\eta]\) is dependent on \(p_i\) if some coloring \(\kappa(s, \alpha)\), where \(s\) is reachable from \(s_I\), is dependent on \(p_i\).

Now we can define \(C(K)\):
\begin{align*}
C(K) := \;& \{v\}\; \cup \\ & \{(p_i\eq z) \in K \mid \C[\eta] \text{ is dependent on } p_i\}.
\end{align*}

A short note on computing (in)dependence. In our implementation, we describe the set \(\kappa(s, \alpha)\) using a partial assignment. Here, a parameter is dependent if it occurs in this partial assignment, making dependence easy to check by computing the set of reachable states and checking which parameters occur in the reachable partial assignments.

\begin{proof}
We show that the proof for \cref{prop:inconsistent} goes through if we replace \(K\) by \(C(K)\). It suffices to show that for \[
    \eta' = \{\theta \in \mathbb{V} \mid \theta \text{ satisfies } C(K) \setminus \{v\} \},
\] we have \(V(\C_{\eta'}[\theta]) = V(\C_\eta[\theta])\) for all \(\theta \in \eta'\). Let \(P := \{p_i  \mid \C_\eta \text{ is independent of } p_i\}\). Because (1) the elements of \(\eta\) and \(\eta'\) only differ in parameters in \(P\) and (2) \(\C_\eta\) or and \(\C_\eta'\) do not depend on any parameters in \(P\), the values of these parameters can never influence the values \(V(\C_{\eta}[\theta])\) or \(V(\C_\eta[\theta])\). Thus, the statement follows.
\end{proof}

\section{Supplementary Definitions}

\subsection{POMDPs and Colored MDPs}

We will start by defining POMDPs.

\begin{definition}[Partially Observable MDP]
    A \emph{partially observable MDP (POMDP)} is a tuple \(\mathcal{D} = (\M, Z, O)\) with an MDP \(\M\), a finite set \(Z\) of observations and an observation function \(O : S \rightarrow Z\).
\end{definition}

In stark contrast to MDPs, where an optimal policy can be found efficiently given the model, policy synthesis in POMDPs is undecidable. We restrict ourselves to synthesizing \emph{deterministic fixed-memory finite-state controllers (FSCs)}, which is a decidable, NP-complete
fragment of policy synthesis. By a preprocessing step, we can even assume that our controllers are memoryless by using an unrolling of the POMDP. The idea is to unroll the POMDP into a new POMDP with \(|S| \cdot n\) states and \(|Z| \cdot n\) observations, on which a memoryless policy corresponds to an FSC on the original POMDP. The memoryless policies are functions \(\pi: Z \rightarrow Act\) on this unrolled POMDP.

\begin{problem}[Memoryless POMDP Policy]
    Given POMDP \(\mathcal{D}\) and specification \(\varphi\), is there a memoryless policy $\pi : Z \rightarrow Act$ s.t. \(\mathcal{D}[\pi] \vDash \varphi\)?
\end{problem}
We use the conversion from POMDPs to Colored MDPs from \cite{DBLP:journals/jair/AndriushchenkoCMJK25}. Here, instantiations of the colored MDP correspond to memoryless policies of the POMDP.

\subsection{Sets of POMDPs as Colored MDPs}

We now ask for a robust policy on a set of POMDPs.

\begin{problem}[Robust Memoryless POMDP Policy]
    Given a set of POMDPs \(\mathfrak{D}\) that share states $S$, actions $Act$ and observations $Z$, and specification \(\varphi\), is there a memoryless policy $x : Z \rightarrow Act$ s.t. \(\D[x] \vDash \varphi\) for all \(\D \in \mathfrak{D}\)?
\end{problem}

This can again be generalized to synthesizing FSCs by an unrolling. To translate this problem statement into our problem statement, we define a single colored MDP that captures a set of POMDPs. This notion is very similar to feature MDPs \cite{DBLP:journals/fac/ChrszonDKB18} and quotient MDPs for a set of MDPs \cite{DBLP:conf/atva/AndriushchenkoCJM24}. As in these concepts, we assume that the POMDPs only differ in the transition functions, which does not lose us generality, and often gives us a compact representation of the set.

\begin{definition}[Colored MDP from set of POMDPs]
    \label{def:coloredmdpfrompomdp}
    Given a set of POMDPs \(\mathfrak{D}\) that share \(S, s_I, Z, O, Act\) and differ in the transition functions \(\mathcal{P}_\D\) for \(\D \in \mathfrak{D}\). Suppose we have a function \(s: \mathbb{V}_Y \rightarrow \mathfrak{D}\) that selects a POMDP based on an assignment of the parameters in \(Y\).
    Let \(Act = \{\alpha_1, \ldots, \alpha_k\}\).
    The resulting colored MDP \(\C = (\M, \mathbb{V}, \kappa)\) has the MDP \(\M = (S, s_I, \mathfrak{D} \times Act, \mathcal{P})\) with \(\mathcal{P}(s, (\D, \alpha)) = \mathcal{P}_\D(s, \alpha)\). The parameters are \(V = X \cup Y\) for \(X = Z\) and the coloring is \(\kappa(s, (\D, \alpha_i)) = \{\theta \mid s(\theta_Y) = \D \text{ and } \theta(O(s, \alpha_i)) = i\}\).
\end{definition}

Note that synthesizing a robust policy for a set of MDPs is as difficult as the problem for a set of POMDPs.

\begin{proposition}
    \label{prop:coloredmdp}
    An assignment \(\theta\) of a colored MDP \(\C\) as above corresponds to the memoryless policy \(\pi: Z \rightarrow Act\) of POMDP \(u(s(\theta_Y))\) with \(\theta(o) = \pi(o)\) for \(o \in Z\).
\end{proposition}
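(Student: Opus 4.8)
\textbf{Proof plan for Proposition~\ref{prop:coloredmdp}.}
The plan is to verify directly from Definition~\ref{def:coloredmdpfrompomdp} that the bijection between assignments $\theta \in \mathbb{V}$ and policy/environment pairs induces matching Markov chains, so that the colored MDP $\C$ faithfully represents the robust-policy problem over $\mathfrak{D}$. I will treat the proposition as a \emph{correspondence} claim: a fixed assignment $\theta$ selects both a concrete POMDP via $s(\theta_Y)$ and a memoryless policy via $\theta_X = \theta_Z$, and I must show that the induced MC $\C[\theta]$ coincides with $\D[\pi]$ for $\D = s(\theta_Y)$ and $\pi$ defined by $\pi(o) = \theta(o)$.

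First I would unpack what $\C[\theta]$ is, using Definition~\ref{def:inducedmdp}: since $\theta$ is a single assignment, $\C[\theta]$ is an MC whose unique enabled action at each state $s$ is the $(\D,\alpha_i)$ with $\theta \in \kappa(s,(\D,\alpha_i))$. From the coloring $\kappa(s,(\D,\alpha_i)) = \{\theta \mid s(\theta_Y) = \D \text{ and } \theta(O(s,\alpha_i)) = i\}$, I would read off that membership forces two conditions simultaneously: the environment component must be the selected POMDP $\D = s(\theta_Y)$, and the action index $i$ must equal the value $\theta$ assigns to the observation parameter at $s$. Because $X = Z$, the observation $O(s)$ is exactly the controllable parameter whose value $\theta(O(s))$ names the action the policy takes; hence the enabled action at $s$ is the one prescribed by $\pi(O(s))$ in the environment $s(\theta_Y)$. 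I would then appeal to the colored-MDP well-formedness condition (exactly one enabled action per state per assignment) to confirm this action is unique, so $\C[\theta]$ is indeed a well-defined MC.

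Next I would match transition probabilities: by Definition~\ref{def:coloredmdpfrompomdp}, $\mathcal{P}(s,(\D,\alpha)) = \mathcal{P}_\D(s,\alpha)$, so the transitions of $\C[\theta]$ at $s$ are precisely $\mathcal{P}_{s(\theta_Y)}(s, \pi(O(s)))$, which is by definition the transition distribution of the induced MC $\D[\pi]$ at $s$ for $\D = s(\theta_Y)$. Since states, initial state, and rewards are shared across $\mathfrak{D}$ and preserved by the construction, the two Markov chains agree on every component, giving $\C[\theta] = (s(\theta_Y))[\pi]$. This is the substance of the claimed correspondence, and combined with Theorem~\ref{theorem:valid} it lets the robust problem over $\mathfrak{D}$ be solved via the problem statement on $\C$.

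I expect the main obstacle to be bookkeeping rather than a genuine difficulty: one must be careful that $\theta(O(s,\alpha_i))$ in the coloring is read correctly as indexing into a per-observation controllable parameter (so that the map $o \mapsto \theta(o)$ is well-defined and single-valued across all states sharing an observation), and that the well-definedness hypothesis of colored MDPs is actually met by this construction. In particular I would want to check that for every $s$ and every $\theta \in \mathbb{V}$ there is exactly one $i$ with $\theta(O(s)) = i$ in the valid action range, which is where the bounded-integer constraint $\tau_{\mathbb{V}}$ restricting $\theta(o)$ to $\{1,\dots,k\}$ does the work. The notation discrepancy between the policy variable $x$ in the Problem Statement and $\pi$/$u(s(\theta_Y))$ in the Proposition should also be reconciled, but this is cosmetic.
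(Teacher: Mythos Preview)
The paper does not actually supply a proof of Proposition~\ref{prop:coloredmdp}; it is stated immediately after Definition~\ref{def:coloredmdpfrompomdp} as a direct consequence of the construction and left without argument. Your plan is exactly the definitional unfolding one would expect: identify the unique enabled action at each state under $\theta$ via the coloring, read off that it is $(s(\theta_Y),\alpha_{\theta(O(s))})$, and then match transition functions using $\mathcal{P}(s,(\D,\alpha)) = \mathcal{P}_\D(s,\alpha)$. This is correct and is the only reasonable route, so there is nothing to compare. Your observations about the notational loose ends (the apparent $O(s,\alpha_i)$ in the coloring where $O(s)$ is meant, and the undefined $u$ in the proposition statement) are accurate; these are artifacts in the paper rather than gaps in your argument.
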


\section{Detailed Experimental Results}

\subsection{\Ca: Constrained Robust Synthesis}
\label{appendix:c1}

\paragraph{Experimental results.}

\begin{table*}
\small\centering
\begin{tabular}{lrrrrrrrr}
\toprule
Benchmark & 1 & 3 & 5 & 7 & 9 & 11 & 13 & 15 \\
\midrule
avoid-8-2-bf & \no{} 9.94 & \yes{} 1191.00 & NR & NR & NR & NR & NR & NR \\
catch-5 & \no{} 1.27 & \no{} 3.36 & \no{} 32.54 & \no{} 749.28 & NR & NR & NR & NR \\
dpm-switch-q10-big-bf & \no{} 7.13 & \no{} 407.45 & NR & NR & NR & NR & NR & NR \\
obstacles-10-6-skip-easy-bf & \no{} 4.10 & \yes{} 5.63 & \yes{} 13.95 & \yes{} 183.15 & \yes{} 71.23 & \yes{} 66.34 & \yes{} 46.81 & \yes{} 103.96 \\
obstacles-demo & \no{} 2.47 & \no{} 5.34 & \yes{} 78.29 & \yes{} 10.87 & \yes{} 88.34 & \yes{} 709.52 & \yes{} 1266.36 & \yes{} 708.77 \\
pacman-6 & \no{} 5.94 & \yes{} 12.54 & \yes{} 35.45 & \yes{} 55.47 & \yes{} 65.45 & \yes{} 184.95 & \yes{} 694.58 & \yes{} 839.51 \\
refuel-04 & \no{} 0.49 & \no{} 0.59 & \no{} 1.22 & \yes{} 1.61 & \yes{} 1.86 & \yes{} 6.04 & \yes{} 3.74 & \yes{} 5.38 \\
\textbf{rocks-6-4} & \no{} 1.23 & \no{} 16.43 & \yes{} 249.33 & \yes{} 216.27 & \yes{} 1134.56 & NR & \yes{} 1192.25 & NR \\
rocks-6-4-signal & \no{} 1.27 & \no{} 10.52 & \yes{} 627.91 & NR & NR & NR & NR & NR \\
rover-1000-bf & \yes{} 5.74 & \yes{} 12.65 & \yes{} 9.19 & \yes{} 11.90 & \yes{} 14.57 & \yes{} 16.68 & \yes{} 28.08 & \yes{} 23.11 \\
uav-operator-roz-workload-bf & \no{} 16.07 & \no{} 58.41 & NR & NR & NR & NR & NR & NR \\
uav-roz & \no{} 17.67 & \no{} 103.98 & NR & NR & NR & NR & NR & NR \\
mem\_2\_obstacles-2-3 & \no{} 0.49 & \no{} 0.74 & \no{} 0.84 & \no{} 1.05 & \no{} 1.22 & \no{} 3.07 & \no{} 10.79 & \yes{} 4.53 \\
mem\_3\_obstacles-3-3 & \no{} 0.64 & \no{} 0.48 & \no{} 1.15 & \no{} 0.91 & \no{} 1.73 & \no{} 3.06 & \no{} 30.30 & \no{} 115.65 \\
\bottomrule
\end{tabular}
\caption{SMPMC Results for \Ca.}
\label{tab:C1:smpmc}
\end{table*}

\begin{table*}
\small\centering
\begin{tabular}{lrrrrrrrr}
\toprule
Benchmark & 1 & 3 & 5 & 7 & 9 & 11 & 13 & 15 \\
\midrule
avoid-8-2-bf & NR & NR & NR & NR & NR & NR & NR & NR \\
catch-5 & NR & NR & NR & NR & NR & NR & NR & NR \\
dpm-switch-q10-big-bf & NR & NR & NR & NR & NR & NR & NR & NR \\
obstacles-10-6-skip-easy-bf & \no{} 47.47 & NR & NR & NR & NR & NR & NR & NR \\
obstacles-demo & \no{} 278.48 & NR & NR & NR & NR & NR & NR & NR \\
pacman-6 & NR & NR & NR & NR & NR & NR & NR & NR \\
refuel-04 & \no{} 1.31 & \no{} 1.80 & \no{} 4.54 & \yes{} 213.28 & \yes{} 422.93 & \yes{} 252.03 & \yes{} 955.92 & \yes{} 628.01 \\
\textbf{rocks-6-4} & NR & NR & NR & NR & NR & NR & NR & NR \\
rocks-6-4-signal & NR & NR & NR & NR & NR & NR & NR & NR \\
rover-1000-bf & NR & NR & NR & NR & NR & NR & NR & NR \\
uav-operator-roz-workload-bf & NR & NR & NR & NR & NR & NR & NR & NR \\
uav-roz & NR & NR & NR & NR & NR & NR & NR & NR \\
mem\_2\_obstacles-2-3 & \no{} 1.50 & \no{} 1.55 & \no{} 1.62 & \no{} 1.82 & \no{} 2.28 & \no{} 3.85 & NR & NR \\
mem\_3\_obstacles-3-3 & \no{} 6.94 & \no{} 3.62 & \no{} 3.67 & \no{} 3.97 & \no{} 4.40 & \no{} 10.20 & \no{} 199.16 & \no{} 526.88 \\
\bottomrule
\end{tabular}
\caption{SMT(LRA) Results for \Ca.}
\label{tab:C1:SMTLRA}
\end{table*}

We show the complete results in \cref{tab:C1:smpmc} and \cref{tab:C1:SMTLRA}.

\subsubsection{Decision tree encoding.}
\label{app:c1enc}

We encode the synthesis of a decision tree with a predefined number of nodes ($N$) as an SMT problem.
For each node $i \in \{0, \dots, N-1\}$, we define a set of solver variables:
\begin{itemize}
    \item A boolean variable $\text{is\_leaf}_i$ determines whether node \(i\) is a leaf or a decision node.
    \item An integer $\text{prop\_index}_i$ selects the input feature that node \(i\) evaluates.
    \item An integer $\text{const}_i$ represents the threshold for comparison in a decision node, or as the output action label in a leaf node.
    \item Integers $\text{left}_i$ and $\text{right}_i$ store the indices of the child nodes.
\end{itemize}
We add constraints to enforce a valid tree structure. This includes ensuring that the number of leaf nodes is $(N+1)/2$ and that every node (except the root) has exactly one parent. For each node, we introduce an uninterpreted function that describes the decision tree's behavior from that node. For a given input, a decision node's function evaluates to its child's function based on the comparison $(\theta(\text{prop\_index}_i) \geq \text{const}_i)$. A leaf node's function simply returns its action label $\text{const}_i$. Finally, we assert that the policy is described by the top node's uninterpreted function.

\subsection{\Cb: Constrained Synthesis}
\label{appendix:c2}

\paragraph{Experimental results.}
The cost-bounded benchmarks can be seen in \cref{tab:C2:cost}. The decision tree benchmarks can be seen in \cref{tab:C2:SMPMC}, \cref{tab:C2:CEGIS}, and \cref{tab:C2:SMTLRA}.

\begin{table*}
\centering
\begin{tabular}{lrrr}
\toprule
Benchmark & PAYNT-CEGIS & SMPMC & SMT(LRA) \\
\midrule
sat\_dpm & \textbf{3.57} & $10.44$ & NR \\
sat\_grid-10-sl-4fsc & \textbf{0.59} & $1.12$ & NR \\
sat\_grid-meet-sl-2fsc & $33.07$ & \textbf{19.49} & NR \\
sat\_maze & $133.89$ & \textbf{0.47} & NR \\
unsat\_dpm & \textbf{5.55} & $8.34$ & NR \\
unsat\_grid-10-sl-4fsc & \textbf{0.77} & $1.43$ & NR \\
unsat\_grid-meet-sl-2fsc & \textbf{47.61} & $83.68$ & NR \\
unsat\_maze & NR & \textbf{1.48} & NR \\
\bottomrule
\end{tabular}
\caption{Cost-Bounded Benchmarks in \Cb{}.}
\label{tab:C2:cost}
\end{table*}

\begin{table*}
\small\centering
\begin{tabular}{lrrrrrrrr}
\toprule
Benchmark & 1 & 3 & 5 & 7 & 9 & 11 & 13 & 15 \\
\midrule
drone-4-1 & \no{} 0.77 & \no{} 4.07 & \no{} 417.78 & NR & NR & NR & NR & NR \\
drone-4-2 & \no{} 1.68 & \no{} 10.65 & NR & NR & NR & NR & NR & NR \\
drone-8-2 & \no{} 6.29 & \no{} 118.38 & NR & NR & NR & NR & NR & NR \\
grid-avoid-4-0.1 & \yes{} 0.43 & \yes{} 0.48 & \yes{} 0.44 & \yes{} 0.42 & \yes{} 0.42 & \yes{} 0.42 & \yes{} 0.38 & \yes{} 0.39 \\
maze-alex & \no{} 0.40 & \no{} 0.79 & \no{} 2.06 & \yes{} 5.56 & \yes{} 16.34 & \yes{} 14.51 & \yes{} 1.01 & \yes{} 10.27 \\
nrp-8 & \yes{} 0.42 & \yes{} 0.57 & \yes{} 0.60 & \yes{} 0.77 & \yes{} 0.59 & \yes{} 0.88 & \yes{} 0.89 & \yes{} 1.06 \\
refuel-06 & \no{} 0.49 & \no{} 0.53 & \no{} 1.37 & \no{} 12.32 & \no{} 216.05 & \yes{} 48.91 & \yes{} 64.42 & \yes{} 256.61 \\
rocks-12 & \no{} 5.80 & \no{} 10.79 & \yes{} 45.32 & \yes{} 65.66 & \yes{} 152.73 & \yes{} 381.47 & \yes{} 421.39 & \yes{} 1104.39 \\
\bottomrule
\end{tabular}
\caption{PAYNT-CEGIS Results for decision tree synthesis in \Cb.}
\label{tab:C2:CEGIS}
\end{table*}

\begin{table*}
\small\centering
\begin{tabular}{lrrrrrrrr}
\toprule
Benchmark & 1 & 3 & 5 & 7 & 9 & 11 & 13 & 15 \\
\midrule
drone-4-1 & \no{} 0.81 & \no{} 14.17 & \no{} 418.56 & NR & NR & NR & NR & NR \\
drone-4-2 & \no{} 0.98 & \no{} 603.52 & NR & NR & NR & NR & NR & NR \\
drone-8-2 & \no{} 3.22 & NR & NR & NR & NR & NR & NR & NR \\
grid-avoid-4-0.1 & \yes{} 0.62 & \yes{} 0.46 & \yes{} 0.66 & \yes{} 0.53 & \yes{} 0.53 & \yes{} 0.59 & \yes{} 0.65 & \yes{} 0.59 \\
maze-alex & \no{} 0.62 & \no{} 0.66 & \no{} 0.64 & \yes{} 0.95 & \yes{} 1.90 & \yes{} 2.10 & \yes{} 1.27 & \yes{} 1.61 \\
nrp-8 & \yes{} 0.56 & \yes{} 0.57 & \yes{} 0.73 & \yes{} 0.59 & \yes{} 0.65 & \yes{} 0.80 & \yes{} 0.68 & \yes{} 0.73 \\
refuel-06 & \no{} 0.49 & \no{} 0.51 & \no{} 2.31 & \no{} 12.56 & \no{} 79.61 & \yes{} 39.15 & \yes{} 76.02 & \yes{} 58.03 \\
rocks-12 & \no{} 4.84 & \no{} 8.67 & \yes{} 58.39 & \yes{} 227.96 & \yes{} 508.86 & \yes{} 843.53 & \yes{} 1703.31 & \yes{} 1325.33 \\
\bottomrule
\end{tabular}
\caption{SMPMC Results for decision tree synthesis in \Cb.}
\label{tab:C2:SMPMC}
\end{table*}

\begin{table*}
\small\centering
\begin{tabular}{lrrrrrrrr}
\toprule
Benchmark & 1 & 3 & 5 & 7 & 9 & 11 & 13 & 15 \\
\midrule
drone-4-1 & NR & NR & NR & NR & NR & NR & NR & NR \\
drone-4-2 & NR & NR & NR & NR & NR & NR & NR & NR \\
drone-8-2 & NR & NR & NR & NR & NR & NR & NR & NR \\
grid-avoid-4-0.1 & \yes{} 0.70 & \yes{} 0.78 & \yes{} 0.78 & \yes{} 0.79 & \yes{} 0.79 & \yes{} 0.62 & \yes{} 0.82 & \yes{} 0.57 \\
maze-alex & \no{} 1.92 & \no{} 318.13 & NR & \yes{} 511.86 & \yes{} 1021.02 & \yes{} 1056.54 & \yes{} 1710.65 & \yes{} 1256.50 \\
nrp-8 & \yes{} 0.78 & \yes{} 0.70 & \yes{} 0.77 & \yes{} 0.86 & \yes{} 0.93 & \yes{} 0.94 & \yes{} 1.04 & \yes{} 1.04 \\
refuel-06 & \no{} 2.21 & \no{} 3.43 & \no{} 12.46 & \no{} 128.89 & \no{} 616.50 & \yes{} 509.62 & \yes{} 142.48 & \yes{} 367.59 \\
rocks-12 & NR & NR & NR & NR & NR & NR & NR & NR \\
\bottomrule
\end{tabular}
\caption{SMT(LRA) Results for decision tree synthesis in \Cb.}
\label{tab:C2:SMTLRA}
\end{table*}

\begin{figure*}
    \pgfplotsset{
        width=\linewidth,
        height=6cm
    }
    \renewcommand{\pathtostuff}{experiments/full_plot_c2/}\begin{tikzpicture}
\begin{semilogyaxis}[
        /pgfplots/table/header=false,
        xlabel=solved instances,
        ylabel=CPU time (\second),
        xmin=0,
        ymin=0.2,
        ymax=2000,
        mark repeat=500,
        cycle multiindex* list={
            Dark2 \nextlist
            mark list
        },
        mark repeat=10,    %
        mark size=2pt,     %
        legend style={nodes={scale=0.5, transform shape}},
        legend entries={PAYNT-CEGIS,SMPMC,SMT(LRA),},
        every axis legend/.append style={at={(1,0)}, anchor=south east, outer xsep=5pt, outer ysep=5pt,},
        ]
        \foreach \tool in { comparison_trees.2025-07-29_15-15-10.results.PAYNT-CEGIS.Benchmarks.xml.bz2.quantile.csv,comparison_trees.2025-07-29_15-15-10.results.SMPMC.Benchmarks.xml.bz2.quantile.csv,comparison_trees.2025-07-29_15-15-10.results.SMTLRA.Benchmarks.xml.bz2.quantile.csv} {
            \addplot+ [line width=1.5pt] table[y index=5] {\pathtostuff\tool};
        }
\end{semilogyaxis}
\end{tikzpicture}
    \caption{Cactus Plot for \Cb.}
    \label{fig:C2:cactus}
\end{figure*}

\subsection{\Cc: Robust Synthesis}
\label{appendix:robust-synthesis}

\paragraph{Modifications of PAYNT-AR.}

We extend PAYNT-AR to support robust synthesis. In PAYNT-AR, the evaluation of the abstraction for a set of MCs (this usually represents a set of policies) includes one MDP model checking call. However, robust synthesis requires an evaluation against every environment, which leads to multiple model checking calls. We therefore get two loops: an outside loop that tries to find a satisfying policy, and an inner loop that evaluates it against the environments. We can leverage the PAYNT-AR in the outer loop, but we need to extend it with an inner loop for evaluation.
In the worst case, the inner loop needs to perform as many MDP model checking calls as there are environments. We try to improve on naive enumeration with two modifications: i) PAYNT-AR-1by1, which performs an enumeration in the inner loop with some tricks to improve naive enumeration, ii) PAYNT-AR-AR, which leverages another AR loop as the inner loop.

\paragraph{PAYNT-AR-1by1.}

PAYNT-AR-1by1 uses (non-naive) enumeration of the environments in the inner loop. It receives a set of policies from the outer loop and then finds the best policy out of this set on a fixed environment. It then checks whether some decisions of this policy are unreachable in this environment; if yes then these decisions can be removed from the policy. Each subsequent iteration in the inner loop then uses this policy and modifies it if needed i.e. the next iteration selects another environment, fixes the policy but some decisions may remain unfixed if they were removed previously, and finally checks the value of the policy in this environment, possibly fixing some of the open decisions. The inner loop continues until either i) it proves that the policy satisfies the bound in every environment and returns this policy, ii) it finds an environment where the policy does not work. In the latter case, the inner loop analyzes the difference between the policy it generated and the optimal policy on the unsatisfying environment and propagates this information to the outer loop which uses it to refine the abstraction. Additionally, when computing the optimal policy on some environment, we check if it satisfies the given bound; if not we can end the synthesis immediately, as we found an environment that has no satisfiable policy.

\paragraph{PAYNT-AR-AR.}

PAYNT-AR-AR is a straightforward improvement on the naive enumeration in the inner loop, however, as our experiments suggest, it has some drawbacks compared to the tricks in the improved enumeration we use in PAYNT-AR-1by1. The main drawback comes from the fact that in order to perform the inner AR loop, we need to fix the whole policy, which hinders the flexibility in the evaluation step as we compute the policy on one fixed environment and a fully fixed policy based on one environment might not work well on the set of environments. After fixing the policy, the inner AR loop over environments starts with the dual specification. The inner AR loop either i) proves that the policy works on all environments, or ii) finds an environment where the policy does not work. We then perform the same analysis as in PAYNT-AR-1by1, comparing two incompatible policies and propagating this information to the outer loop.

\paragraph{Experimental results.}

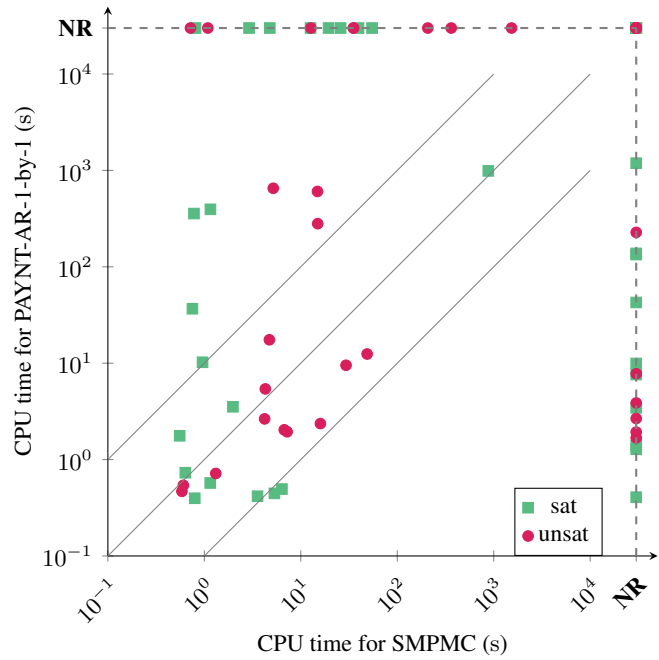
\begin{figure*}
\centering
    \pgfplotsset{
        width=0.5\linewidth,
        height=0.5\linewidth
    }
    \renewcommand{\pathtostuff}{experiments/robust-2025-07-22_18-59-50/}\newcommand{\scatterfile}{scatter_PAYNT-AR-1-by-1.table.csv}
\begin{tikzpicture}
\begin{loglogaxis}[
    xlabel={\small CPU time for SMPMC (\second)},
    ylabel={\small CPU time for PAYNT-AR-1-by-1 (\second)},
    label style={font=\small},
    xlabel near ticks,
    ylabel near ticks,
    xlabel shift={-1ex},
    ylabel shift={-1ex},
    xmin=0.1, xmax=50000,
    ymin=0.1, ymax=50000,
    domain=0.1:10001,
    clip mode=individual,
    axis equal image,
    axis lines=left, %
    xtick={0.1,1,10,100,1000,10000,30000},
    ytick={0.1,1,10,100,1000,10000,30000},
    xticklabels={$10^{-1}$,$10^{0}$,$10^{1}$,$10^{2}$,$10^{3}$,$10^{4}$,\textbf{NR}},
    yticklabels={$10^{-1}$,$10^{0}$,$10^{1}$,$10^{2}$,$10^{3}$,$10^{4}$,\textbf{NR}},
    xticklabel style={rotate=45, anchor=north east, font=\small},
    yticklabel style={font=\small},
    minor tick num=0, %
    legend style={at={(0.9, 0)},anchor=south east, font=\small}
    ]
    \addlegendimage{only marks, mark=square*, draw=pastelgreen, fill=pastelgreen}
    \addlegendimage{only marks, mark=*, draw=ruby, fill=ruby}
    \addplot+[only marks, scatter, scatter src=explicit symbolic, 
              scatter/classes={sat={mark=square*, draw=pastelgreen, fill=pastelgreen}, unsat={mark=*, draw=ruby, fill=ruby}}] table[
                 header=false,
                 skip first n=3,
                 x index=4,
                 y index=8,
                 meta index=1
            ] {\pathtostuff\scatterfile};
    \addlegendentry{sat}
    \addlegendentry{unsat}

    \addplot[gray, domain=0.1:10000] {x};
    \addplot[gray, domain=0.1:1000] {10*x};
    \addplot[gray, domain=1:10000] {x/10};

    \addplot[gray, thick, dashed, domain=0.1:30000] {30000}; %
    \addplot[gray, thick, dashed] coordinates {(30000,0.1) (30000,30000)}; %

\end{loglogaxis}
\end{tikzpicture}
    \caption{\Cc: PAYNT-AR-1-by-1 vs SMPMC.}
    \label{fig:c3}
\end{figure*}

We show the results in \cref{tab:C3:sat} and \cref{tab:C3:unsat}.
We show a larger cactus plot in \cref{fig:quantilec3}.

\begin{table*}
\centering
\scriptsize
\begin{tabular}{lrrrrrrr}
\toprule
Benchmark & \texttt{AR-1-by-1} & \texttt{AR-AR} & \textbf{SMPMC} & \texttt{SMT(LRA)} & \#Assignments & \#States & SMPMC Iter. \\
\midrule
sat\_box-pushing & NR & NR & \textbf{54.85} & NR & $1.15\cdot 10^{18}$ & $20880$ & $3909$ \\
sat\_dpm & NR & $51.50$ & \textbf{19.41} & NR & $3.14\cdot 10^{10}$ & $19606$ & $242$ \\
sat\_maze & \textbf{0.45} & $0.56$ & $5.33$ & NR & $1.51\cdot 10^{8}$ & $195$ & $4423$ \\
sat\_mem\_1\_avoid-8-2 & NR & NR & NR & NR & NR & NR & NR \\
sat\_mem\_1\_avoid-8-2-bf & NR & NR & NR & NR & NR & NR & NR \\
sat\_mem\_1\_avoid-8-2-easy & NR & NR & NR & NR & NR & NR & NR \\
sat\_mem\_1\_avoid-8-2-easy-bf & NR & NR & NR & NR & $6.33\cdot 10^{2821}$ & $21169$ & NR \\
sat\_mem\_1\_catch-5 & \textbf{3.44} & NR & NR & NR & $4.41\cdot 10^{363}$ & $625$ & NR \\
sat\_mem\_1\_dpm & $394.85$ & $119.13$ & \textbf{1.16} & NR & $7.75\cdot 10^{9}$ & $737$ & $249$ \\
sat\_mem\_1\_dpm-switch-q10 & \textbf{42.65} & NR & NR & NR & $5.97\cdot 10^{275}$ & $1594$ & NR \\
sat\_mem\_1\_dpm-switch-q10 & $36.72$ & $3.54$ & \textbf{0.75} & NR & $2.55\cdot 10^{7}$ & $1594$ & $205$ \\
sat\_mem\_1\_dpm-switch-q10-big & $134.01$ & \textbf{3.72} & NR & NR & $7.69\cdot 10^{1503}$ & $8767$ & NR \\
sat\_mem\_1\_dpm-switch-q10-big-bf & $1188.33$ & \textbf{319.54} & NR & NR & $1.39\cdot 10^{1506}$ & $8767$ & NR \\
sat\_mem\_1\_dpm-switch-q10-bf & $988.35$ & NR & \textbf{878.77} & NR & $1.19\cdot 10^{277}$ & $1594$ & $40197$ \\
sat\_mem\_1\_obstacles-10-6-skip-easy & NR & NR & \textbf{12.63} & NR & $1.65\cdot 10^{63}$ & $718$ & $6853$ \\
sat\_mem\_1\_obstacles-10-6-skip-easy-bf & NR & NR & \textbf{25.83} & NR & $1.00\cdot 10^{66}$ & $783$ & $17531$ \\
sat\_mem\_1\_obstacles-demo & \textbf{0.41} & $0.42$ & NR & NR & $1.42\cdot 10^{22}$ & $83$ & NR \\
sat\_mem\_1\_pacman-6 & \textbf{1.28} & $9.32$ & NR & NR & $3.19\cdot 10^{760}$ & $1296$ & NR \\
sat\_mem\_1\_refuel-04 & $0.40$ & \textbf{0.36} & $0.80$ & NR & $1.32\cdot 10^{12}$ & $145$ & $551$ \\
sat\_mem\_1\_rocks-4-2 & \textbf{0.42} & $4.17$ & $3.56$ & NR & $6.42\cdot 10^{30}$ & $162$ & $4590$ \\
sat\_mem\_1\_rocks-6-4-small & \textbf{1.39} & $21.17$ & NR & NR & $3.32\cdot 10^{327}$ & $2716$ & NR \\
\textbf{sat\_mem\_1\_rocks-6-4} & NR & NR & NR & NR & $8.50\cdot 10^{328}$ & $2736$ & NR \\
sat\_mem\_1\_rocks-8-1 & \textbf{0.49} & $0.74$ & $6.37$ & NR & $2.18\cdot 10^{40}$ & $192$ & $7131$ \\
sat\_mem\_1\_rover & $356.65$ & $149.61$ & \textbf{0.78} & NR & $2.10\cdot 10^{6}$ & $902$ & $13$ \\
sat\_mem\_1\_rover-100 & $1.77$ & $1.45$ & \textbf{0.56} & NR & $6.91\cdot 10^{3}$ & $1669$ & $7$ \\
sat\_mem\_1\_rover-100-big & $0.73$ & \textbf{0.58} & $0.64$ & NR & $3.47\cdot 10^{62}$ & $1703$ & $5$ \\
sat\_mem\_1\_rover-100-big-bf & NR & NR & \textbf{0.81} & NR & $3.90\cdot 10^{67}$ & $1703$ & $18$ \\
sat\_mem\_1\_rover-1000 & $3.53$ & \textbf{1.93} & $1.98$ & NR & $1.86\cdot 10^{604}$ & $16986$ & $8$ \\
sat\_mem\_1\_rover-1000-bf & NR & NR & \textbf{4.77} & NR & $4.34\cdot 10^{608}$ & $17003$ & $19$ \\
sat\_mem\_1\_uav-operator-roz-workload & $9.93$ & \textbf{4.70} & NR & NR & $4.73\cdot 10^{1552}$ & $7903$ & NR \\
sat\_mem\_1\_uav-operator-roz-workload-bf & NR & NR & NR & NR & $4.85\cdot 10^{1758}$ & $9007$ & NR \\
sat\_mem\_1\_uav-roz & $7.63$ & \textbf{4.24} & NR & NR & $3.78\cdot 10^{1552}$ & $7903$ & NR \\
sat\_mem\_1\_uav-roz-bf & $137.77$ & \textbf{34.64} & NR & NR & $2.18\cdot 10^{3491}$ & $17927$ & NR \\
sat\_mem\_1\_virus-bf & NR & NR & \textbf{2.91} & NR & $5.01\cdot 10^{194}$ & $1678$ & $17$ \\
sat\_mem\_2\_dpm & NR & $290.38$ & \textbf{39.66} & NR & $5.15\cdot 10^{29}$ & $1474$ & $8097$ \\
sat\_mem\_2\_dpm-switch-q10 & NR & NR & NR & NR & $7.28\cdot 10^{36}$ & $3188$ & NR \\
sat\_mem\_2\_obstacles-8-3 & NR & NR & NR & NR & $8.61\cdot 10^{23}$ & $936$ & NR \\
sat\_mem\_2\_obstacles-8-6 & NR & NR & NR & NR & $8.81\cdot 10^{26}$ & $1138$ & NR \\
sat\_mem\_2\_refuel-04 & \textbf{0.57} & $0.65$ & $1.15$ & NR & $1.25\cdot 10^{69}$ & $289$ & $708$ \\
sat\_mem\_2\_rover-100 & $10.22$ & $5.66$ & \textbf{0.96} & NR & $3.04\cdot 10^{16}$ & $3338$ & $7$ \\
sat\_mem\_3\_obstacles-8-3 & NR & NR & NR & NR & $1.19\cdot 10^{47}$ & $1404$ & NR \\
sat\_mem\_3\_obstacles-8-6 & NR & NR & NR & NR & $2.53\cdot 10^{53}$ & $1707$ & NR \\
\bottomrule
\end{tabular}
\caption{\Cc{} experiments: SAT models (continued on next page).}
\label{tab:C3:sat}
\end{table*}

\begin{table*}
\centering
\scriptsize
\begin{tabular}{lrrrrrrr}
\toprule
Benchmark & \texttt{AR-1-by-1} & \texttt{AR-AR} & \textbf{SMPMC} & \texttt{SMT(LRA)} & \#Assignments & \#States & SMPMC Iter. \\
\midrule
unsat\_box-pushing & NR & NR & \textbf{362.54} & NR & $1.15\cdot 10^{18}$ & $20880$ & $8247$ \\
unsat\_dpm & NR & NR & \textbf{207.31} & NR & $3.14\cdot 10^{10}$ & $19606$ & $2363$ \\
unsat\_maze & $2.36$ & \textbf{0.57} & $15.97$ & NR & $1.51\cdot 10^{8}$ & $195$ & $8523$ \\
unsat\_mem\_1\_avoid & $9.52$ & \textbf{2.55} & $29.44$ & $1564.52$ & $4.00\cdot 10^{4}$ & $17567$ & $10$ \\
unsat\_mem\_1\_avoid-8-2 & NR & NR & NR & NR & NR & NR & NR \\
unsat\_mem\_1\_avoid-8-2-bf & NR & NR & NR & NR & NR & NR & NR \\
unsat\_mem\_1\_avoid-8-2-easy & \textbf{2.67} & NR & NR & NR & NR & NR & NR \\
unsat\_mem\_1\_avoid-8-2-easy-bf & NR & NR & NR & NR & $6.33\cdot 10^{2821}$ & $21169$ & NR \\
unsat\_mem\_1\_dpm & $604.66$ & NR & \textbf{14.90} & NR & $7.75\cdot 10^{9}$ & $737$ & $334$ \\
unsat\_mem\_1\_dpm-switch-q10 & NR & \textbf{0.59} & $35.07$ & NR & $5.97\cdot 10^{275}$ & $1594$ & $4747$ \\
unsat\_mem\_1\_dpm-switch-q10 & $652.65$ & $213.39$ & \textbf{5.18} & NR & $2.55\cdot 10^{7}$ & $1594$ & $125$ \\
unsat\_mem\_1\_dpm-switch-q10-big & NR & \textbf{0.99} & NR & NR & $7.69\cdot 10^{1503}$ & $8767$ & NR \\
unsat\_mem\_1\_dpm-switch-q10-big-bf & $227.51$ & \textbf{3.09} & NR & NR & $1.39\cdot 10^{1506}$ & $8767$ & NR \\
unsat\_mem\_1\_dpm-switch-q10-bf & NR & $0.77$ & \textbf{0.72} & NR & $1.19\cdot 10^{277}$ & $1594$ & $4$ \\
unsat\_mem\_1\_obstacles-10-6-skip-easy & NR & NR & NR & NR & $1.65\cdot 10^{63}$ & $718$ & NR \\
unsat\_mem\_1\_obstacles-10-6-skip-easy-bf & NR & NR & \textbf{35.26} & NR & $1.00\cdot 10^{66}$ & $783$ & $17970$ \\
unsat\_mem\_1\_obstacles-8-3 & $2.64$ & \textbf{0.41} & $4.22$ & $3.53$ & $2.99\cdot 10^{6}$ & $468$ & $25$ \\
unsat\_mem\_1\_obstacles-8-6 & $5.41$ & \textbf{0.40} & $4.29$ & $5.24$ & $1.19\cdot 10^{7}$ & $569$ & $25$ \\
unsat\_mem\_1\_obstacles-demo & \textbf{0.47} & $0.49$ & $0.59$ & $9.85$ & $1.42\cdot 10^{22}$ & $83$ & $51$ \\
unsat\_mem\_1\_obstacles-maze-4 & \textbf{2.04} & NR & $6.74$ & $58.61$ & $3.10\cdot 10^{4}$ & $31$ & $51$ \\
unsat\_mem\_1\_obstacles-maze-4-2 & \textbf{1.94} & NR & $7.22$ & $167.34$ & $6.20\cdot 10^{4}$ & $46$ & $47$ \\
unsat\_mem\_1\_rover-100-big & \textbf{0.54} & $0.58$ & $0.61$ & $300.65$ & $3.47\cdot 10^{62}$ & $1703$ & $3$ \\
unsat\_mem\_1\_rover-100-big-bf & NR & \textbf{0.87} & $1.09$ & NR & $3.90\cdot 10^{67}$ & $1703$ & $3$ \\
unsat\_mem\_1\_rover-1000 & \textbf{0.72} & $0.99$ & $1.31$ & NR & $1.86\cdot 10^{604}$ & $16986$ & $3$ \\
unsat\_mem\_1\_rover-1000-bf & $17.49$ & \textbf{4.70} & $4.72$ & NR & $4.34\cdot 10^{608}$ & $17003$ & $3$ \\
unsat\_mem\_1\_uav-operator-roz-workload & \textbf{1.68} & $1.71$ & NR & NR & $4.73\cdot 10^{1552}$ & $7903$ & NR \\
unsat\_mem\_1\_uav-operator-roz-workload-bf & $7.77$ & \textbf{2.90} & NR & NR & $4.85\cdot 10^{1758}$ & $9007$ & NR \\
unsat\_mem\_1\_uav-roz & \textbf{1.93} & $2.67$ & NR & NR & $3.78\cdot 10^{1552}$ & $7903$ & NR \\
unsat\_mem\_1\_uav-roz-bf & $3.87$ & \textbf{3.72} & NR & NR & $2.18\cdot 10^{3491}$ & $17927$ & NR \\
unsat\_mem\_1\_virus-bf & \textbf{12.45} & NR & $48.70$ & NR & $5.01\cdot 10^{194}$ & $1678$ & $3447$ \\
unsat\_mem\_2\_avoid & NR & NR & NR & NR & $4.19\cdot 10^{12}$ & $35133$ & NR \\
unsat\_mem\_2\_dpm & NR & NR & NR & NR & $5.15\cdot 10^{29}$ & $1474$ & NR \\
unsat\_mem\_2\_dpm-switch-q10 & NR & NR & \textbf{1530.89} & NR & $7.28\cdot 10^{36}$ & $3188$ & $284920$ \\
unsat\_mem\_2\_obstacles-8-3 & NR & NR & NR & NR & $8.61\cdot 10^{23}$ & $936$ & NR \\
unsat\_mem\_2\_obstacles-8-6 & NR & NR & NR & NR & $8.81\cdot 10^{26}$ & $1138$ & NR \\
unsat\_mem\_2\_obstacles-maze-4 & $280.11$ & NR & \textbf{14.97} & NR & $5.20\cdot 10^{11}$ & $62$ & $19181$ \\
unsat\_mem\_2\_obstacles-maze-4-2 & NR & NR & \textbf{12.70} & NR & $1.06\cdot 10^{15}$ & $92$ & $15087$ \\
unsat\_mem\_3\_obstacles-8-3 & NR & NR & NR & NR & $1.19\cdot 10^{47}$ & $1404$ & NR \\
unsat\_mem\_3\_obstacles-maze-4 & NR & NR & NR & NR & $9.68\cdot 10^{20}$ & $93$ & NR \\
unsat\_mem\_3\_obstacles-maze-4-2 & NR & NR & NR & NR & $2.78\cdot 10^{28}$ & $138$ & NR \\
\bottomrule
\end{tabular}
\caption{\Cc{} experiments: UNSAT models.}
\label{tab:C3:unsat}
\end{table*}

\begin{figure*}
    \pgfplotsset{
        width=\linewidth,
        height=6cm
    }
    \renewcommand{\pathtostuff}{experiments/robust-2025-07-22_18-59-50/}\begin{tikzpicture}
\begin{semilogyaxis}[
        /pgfplots/table/header=false,
        xlabel=solved instances,
        ylabel=CPU time (\second),
        xmin=0,
        ymin=0.2,
        ymax=2000,
        mark repeat=500,
        cycle multiindex* list={
            Dark2 \nextlist
            mark list
        },
        mark repeat=10,    %
        mark size=2pt,     %
        legend style={nodes={scale=0.5, transform shape}},
        legend entries={PAYNT-AR-1-by-1,PAYNT-AR-MDP-Eval,SMPMC,SMT(LRA),},
        every axis legend/.append style={at={(1,0)}, anchor=south east, outer xsep=5pt, outer ysep=5pt,},
        ]
        \foreach \tool in {comparison_robust.2025-07-22_18-59-51.results.PAYNT-AR-1-by-1.xml.bz2.quantile.csv,comparison_robust.2025-07-22_18-59-51.results.PAYNT-AR-MDP-Eval.xml.bz2.quantile.csv,comparison_robust.2025-07-22_18-59-51.results.SMPMC.xml.bz2.quantile.csv,comparison_robust.2025-07-22_18-59-51.results.SMTLRA.xml.bz2.quantile.csv} {
            \addplot+ [line width=1.5pt] table[y index=5] {\pathtostuff\tool};
        }
\end{semilogyaxis}
\end{tikzpicture}
    \caption{Cactus Plot for \Cc.}
    \label{fig:quantilec3}
\end{figure*}
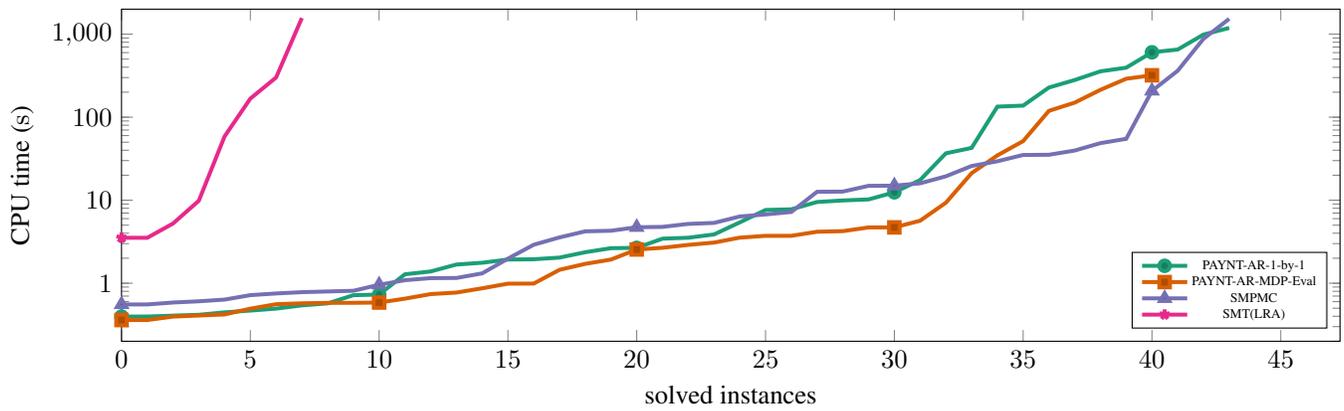

\subsection{\Cd: Plain Synthesis}

\paragraph{Experimental results.}
We show tables in \cref{tab:Q4:sat} and \cref{tab:Q4:unsat}.
We show a scatter plot of SMPMC against PAYNT-AR in \cref{fig:Q4:ar} and an enlarged cactus plot in \cref{fig:Q4:cactus}.

\begin{figure*}
\centering
    \pgfplotsset{
        width=0.5\linewidth,
        height=0.5\linewidth
    }
    \renewcommand{\pathtostuff}{experiments/simple-2025-07-29_21-15-45/}\newcommand{\scatterfile}{scatter_PAYNT-AR.table.csv}
\begin{tikzpicture}
\begin{loglogaxis}[
    xlabel={\small CPU time for SMPMC (\second)},
    ylabel={\small CPU time for PAYNT-AR (\second)},
    label style={font=\small},
    xlabel near ticks,
    ylabel near ticks,
    xlabel shift={-1ex},
    ylabel shift={-1ex},
    xmin=0.1, xmax=50000,
    ymin=0.1, ymax=50000,
    domain=0.1:10001,
    clip mode=individual,
    axis equal image,
    axis lines=left, %
    xtick={0.1,1,10,100,1000,10000,30000},
    ytick={0.1,1,10,100,1000,10000,30000},
    xticklabels={$10^{-1}$,$10^{0}$,$10^{1}$,$10^{2}$,$10^{3}$,$10^{4}$,\textbf{NR}},
    yticklabels={$10^{-1}$,$10^{0}$,$10^{1}$,$10^{2}$,$10^{3}$,$10^{4}$,\textbf{NR}},
    xticklabel style={rotate=45, anchor=north east, font=\small},
    yticklabel style={font=\small},
    minor tick num=0, %
    legend style={at={(0.9, 0)},anchor=south east, font=\small}
    ]
    \addlegendimage{only marks, mark=square*, draw=pastelgreen, fill=pastelgreen}
    \addlegendimage{only marks, mark=*, draw=ruby, fill=ruby}
    \addplot+[only marks, scatter, scatter src=explicit symbolic, 
              scatter/classes={sat={mark=square*, draw=pastelgreen, fill=pastelgreen}, unsat={mark=*, draw=ruby, fill=ruby}}] table[
                 header=false,
                 skip first n=3,
                 x index=4,
                 y index=8,
                 meta index=1
            ] {\pathtostuff\scatterfile};
    \addlegendentry{sat}
    \addlegendentry{unsat}

    \addplot[gray, domain=0.1:10000] {x};
    \addplot[gray, domain=0.1:1000] {10*x};
    \addplot[gray, domain=1:10000] {x/10};

    \addplot[gray, thick, dashed, domain=0.1:30000] {30000}; %
    \addplot[gray, thick, dashed] coordinates {(30000,0.1) (30000,30000)}; %

\end{loglogaxis}
\end{tikzpicture}
    \caption{\Cd: PAYNT-AR vs SMPMC.}
    \label{fig:Q4:ar}
\end{figure*}
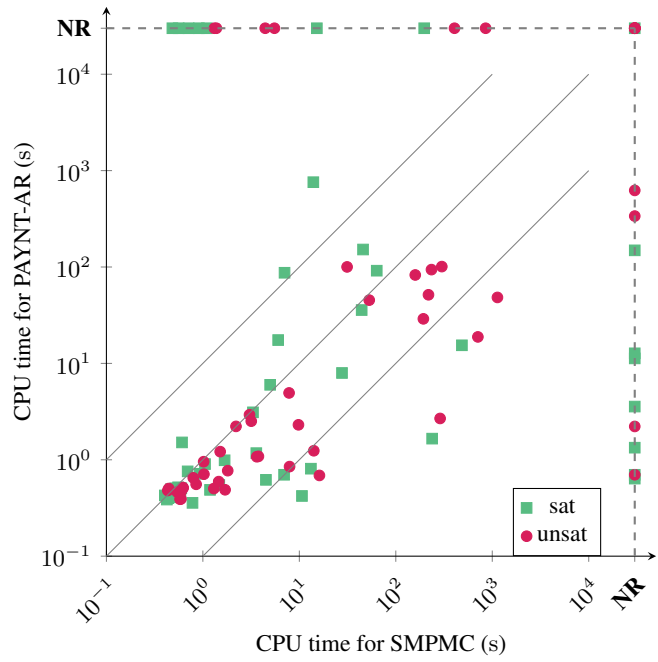
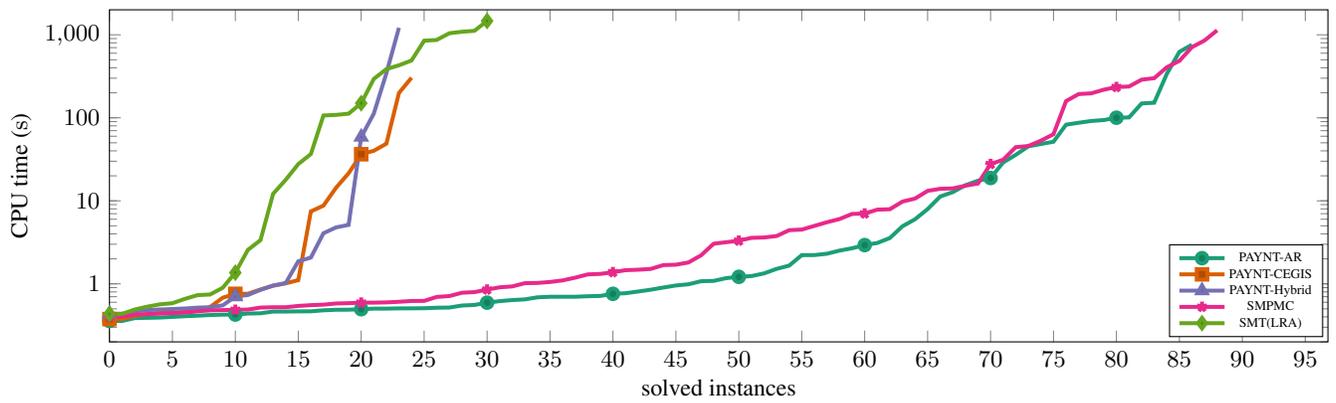
\begin{figure*}
    \pgfplotsset{
        width=\linewidth,
        height=6cm
    }
    \renewcommand{\pathtostuff}{experiments/simple-2025-07-29_21-15-45/}\begin{tikzpicture}
\begin{semilogyaxis}[
        /pgfplots/table/header=false,
        xlabel={\small solved instances},
        ylabel={\small CPU time (\second)},
        label style={font=\small},
        xlabel style={at={(ticklabel cs:0.5,-2.0)},anchor=near ticklabel},
        ylabel style={at={(ticklabel cs:0.5,-2.0)},anchor=near ticklabel},
        xmin=0,
        ymin=0.2,
        ymax=2000,
        mark repeat=500,
        cycle multiindex* list={
            Dark2 \nextlist
            mark list
        },
        mark repeat=10,    %
        mark size=2pt,     %
        legend style={nodes={scale=0.5, transform shape}},
        legend entries={PAYNT-AR,PAYNT-CEGIS,PAYNT-Hybrid,SMPMC,SMT(LRA),},
        every axis legend/.append style={at={(1,0)}, anchor=south east, outer xsep=2pt, outer ysep=2pt,},
        xticklabel style={font=\small},
        yticklabel style={font=\small},
        ]
        \foreach \tool in {comparison_simple.2025-07-29_21-15-45.results.PAYNT-AR.xml.bz2.quantile.csv,comparison_simple.2025-07-29_21-15-45.results.PAYNT-CEGIS.xml.bz2.quantile.csv,comparison_simple.2025-07-29_21-15-45.results.PAYNT-Hybrid.xml.bz2.quantile.csv,comparison_simple.2025-07-29_21-15-45.results.SMPMC.xml.bz2.quantile.csv,comparison_simple.2025-07-29_21-15-45.results.SMTLRA.xml.bz2.quantile.csv} {
            \addplot+ [line width=1.5pt] table[y index=5] {\pathtostuff\tool};
        }
\end{semilogyaxis}
\end{tikzpicture}
    \caption{Larger Cactus Plot for \Cd.}
    \label{fig:Q4:cactus}
\end{figure*}

\begin{table*}
\centering
\footnotesize
\begin{tabular}{lrrrrrrrr}
\toprule
Benchmark & \texttt{AR} & \texttt{CEGIS} & \texttt{Hybrid} & \textbf{SMPMC} & \texttt{SMT(LRA)} & \#Assignments & \#States & \#SMPMC Iter. \\
\midrule
sat\_box-pushing-1fsc & \textbf{3.11} & NR & NR & $3.31$ & NR & $1.05\cdot 10^{6}$ & $5220$ & $166$ \\
sat\_box-pushing-2fsc & \textbf{15.41} & NR & NR & $484.61$ & NR & $1.15\cdot 10^{18}$ & $20880$ & $28468$ \\
sat\_dpm & $151.89$ & $48.75$ & NR & \textbf{45.76} & NR & $4.30\cdot 10^{7}$ & $1652566$ & $2$ \\
sat\_grid & $1.51$ & $0.76$ & $4.07$ & \textbf{0.61} & $1087.32$ & $6.55\cdot 10^{4}$ & $289$ & $211$ \\
sat\_grid-10-sl-4fsc & NR & $0.76$ & NR & \textbf{0.71} & NR & $6.55\cdot 10^{4}$ & $2405$ & $20$ \\
sat\_grid-meet-sl-2fsc & $758.45$ & $40.07$ & NR & \textbf{13.97} & NR & $1.68\cdot 10^{7}$ & $9216$ & $1060$ \\
sat\_grid3x3-1fsc & \textbf{1.17} & NR & NR & $3.58$ & NR & $3.81\cdot 10^{12}$ & $6075$ & $257$ \\
sat\_grid3x3-2fsc & \textbf{7.95} & NR & NR & $27.70$ & NR & $1.00\cdot 10^{36}$ & $24300$ & $427$ \\
sat\_herman & \textbf{35.78} & $36.42$ & $1213.61$ & $44.29$ & NR & $3.12\cdot 10^{6}$ & $18753$ & $19$ \\
sat\_maze & NR & $14.36$ & NR & \textbf{0.56} & NR & $9.44\cdot 10^{6}$ & $183$ & $57$ \\
sat\_maze-sl-5 & NR & NR & NR & NR & NR & $7.95\cdot 10^{25}$ & $424$ & NR \\
sat\_mem\_1\_4x3-95 & \textbf{0.43} & NR & NR & $0.59$ & NR & $1.64\cdot 10^{4}$ & $22$ & $256$ \\
sat\_mem\_1\_4x5x2-95 & \textbf{0.39} & NR & NR & $0.42$ & $1.36$ & $1.02\cdot 10^{3}$ & $79$ & $2$ \\
sat\_mem\_1\_drone-4-1 & \textbf{3.56} & NR & NR & NR & NR & $6.16\cdot 10^{113}$ & $1226$ & NR \\
sat\_mem\_1\_drone-4-2 & \textbf{0.64} & NR & NR & NR & NR & $5.92\cdot 10^{225}$ & $1226$ & NR \\
sat\_mem\_1\_drone-8-2 & \textbf{12.73} & NR & NR & NR & NR & $3.02\cdot 10^{958}$ & $13042$ & NR \\
sat\_mem\_1\_grid-avoid-4-0.1 & \textbf{0.44} & $0.46$ & $0.46$ & $0.49$ & $0.53$ & $4.00\cdot 10^{0}$ & $17$ & $3$ \\
sat\_mem\_1\_hallway & $87.16$ & NR & NR & \textbf{7.03} & NR & $3.81\cdot 10^{12}$ & $1500$ & $179$ \\
sat\_mem\_1\_maze-alex & NR & $1.01$ & NR & \textbf{0.48} & $27.71$ & $1.64\cdot 10^{4}$ & $15$ & $46$ \\
sat\_mem\_1\_milos-aaai97 & \textbf{0.70} & NR & NR & $6.95$ & NR & $1.01\cdot 10^{7}$ & $165$ & $7413$ \\
sat\_mem\_1\_network & \textbf{0.40} & NR & NR & $0.44$ & NR & $6.40\cdot 10^{1}$ & $19$ & $19$ \\
sat\_mem\_1\_nrp-8 & $0.46$ & \textbf{0.41} & $0.52$ & $0.45$ & $0.58$ & $2.56\cdot 10^{2}$ & $125$ & $2$ \\
sat\_mem\_1\_query-s2 & $0.43$ & NR & NR & \textbf{0.40} & $293.81$ & $1.60\cdot 10^{1}$ & $36$ & $2$ \\
sat\_mem\_1\_query-s3 & \textbf{0.51} & NR & NR & $0.52$ & NR & $8.10\cdot 10^{1}$ & $108$ & $7$ \\
sat\_mem\_1\_refuel-06 & \textbf{0.62} & $304.01$ & $0.95$ & $4.51$ & $106.62$ & $9.03\cdot 10^{13}$ & $208$ & $5012$ \\
sat\_mem\_1\_refuel-20 & NR & \textbf{7.46} & NR & $15.20$ & NR & $2.99\cdot 10^{56}$ & $6834$ & $2396$ \\
sat\_mem\_1\_rocks-12 & \textbf{1.34} & NR & NR & NR & NR & $4.12\cdot 10^{1076}$ & $6553$ & NR \\
sat\_mem\_2\_4x3-95 & \textbf{0.42} & NR & NR & $10.65$ & NR & $2.15\cdot 10^{9}$ & $38$ & $14358$ \\
sat\_mem\_2\_4x5x2-95 & \textbf{0.36} & NR & NR & $0.78$ & $12.15$ & $4.19\cdot 10^{6}$ & $153$ & $524$ \\
sat\_mem\_2\_grid-avoid-4-0.1 & $0.52$ & \textbf{0.43} & $0.44$ & $0.55$ & $0.73$ & $1.28\cdot 10^{2}$ & $31$ & $14$ \\
sat\_mem\_2\_maze-alex & NR & $199.20$ & NR & \textbf{1.10} & NR & $1.07\cdot 10^{9}$ & $25$ & $831$ \\
sat\_mem\_2\_milos-aaai97 & NR & NR & NR & \textbf{0.91} & NR & $5.32\cdot 10^{19}$ & $328$ & $390$ \\
sat\_mem\_2\_network & \textbf{0.49} & NR & NR & $1.19$ & $490.06$ & $5.24\cdot 10^{5}$ & $36$ & $743$ \\
sat\_mem\_2\_nrp-8 & $0.47$ & \textbf{0.45} & NR & $0.48$ & $3.36$ & $2.88\cdot 10^{17}$ & $230$ & $2$ \\
sat\_mem\_2\_query-s2 & \textbf{0.41} & NR & NR & $0.52$ & NR & $1.31\cdot 10^{5}$ & $70$ & $24$ \\
sat\_mem\_2\_query-s3 & $6.00$ & NR & NR & \textbf{4.99} & NR & $3.36\cdot 10^{6}$ & $214$ & $267$ \\
sat\_mem\_2\_refuel-06 & \textbf{0.70} & NR & NR & NR & NR & $4.08\cdot 10^{42}$ & $394$ & NR \\
sat\_mem\_3\_4x3-95 & \textbf{1.66} & NR & NR & $238.16$ & NR & $3.21\cdot 10^{14}$ & $54$ & $144107$ \\
sat\_mem\_3\_4x5x2-95 & \textbf{0.81} & NR & NR & $13.18$ & $428.97$ & $1.55\cdot 10^{10}$ & $227$ & $15389$ \\
sat\_mem\_3\_grid-avoid-4-0.1 & \textbf{0.41} & $0.45$ & $0.50$ & $0.49$ & $2.56$ & $5.18\cdot 10^{3}$ & $45$ & $11$ \\
sat\_mem\_3\_maze-alex & NR & NR & NR & \textbf{197.13} & NR & $1.07\cdot 10^{14}$ & $35$ & $152265$ \\
sat\_mem\_3\_network & $17.46$ & NR & NR & \textbf{6.03} & NR & $1.55\cdot 10^{10}$ & $53$ & $7200$ \\
sat\_mem\_3\_query-s2 & \textbf{0.72} & NR & NR & $0.93$ & NR & $6.53\cdot 10^{9}$ & $104$ & $151$ \\
sat\_mem\_3\_query-s3 & $91.56$ & NR & NR & \textbf{63.47} & NR & $8.47\cdot 10^{11}$ & $320$ & $2016$ \\
sat\_mem\_3\_refuel-06 & \textbf{11.27} & NR & NR & NR & NR & $5.67\cdot 10^{76}$ & $580$ & NR \\
sat\_mem\_4\_4x3-95 & \textbf{148.91} & NR & NR & NR & NR & $7.38\cdot 10^{19}$ & $70$ & NR \\
sat\_mem\_4\_grid-avoid-4-0.1 & $0.90$ & \textbf{0.68} & $4.78$ & $1.05$ & $112.09$ & $2.62\cdot 10^{5}$ & $59$ & $549$ \\
sat\_pole & \textbf{0.99} & NR & NR & $1.68$ & NR & $1.13\cdot 10^{15}$ & $9241$ & $189$ \\
sat\_pole-res & $0.76$ & NR & NR & \textbf{0.70} & NR & $1.33\cdot 10^{6}$ & $6745$ & $25$ \\
sat\_recycling-2fsc & \textbf{0.44} & NR & NR & $0.52$ & NR & $1.68\cdot 10^{6}$ & $344$ & $32$ \\
sat\_recycling-3fsc & \textbf{0.51} & NR & NR & $0.60$ & NR & $2.82\cdot 10^{11}$ & $774$ & $51$ \\
sat\_refuel-06-res & \textbf{0.55} & NR & $0.70$ & $1.48$ & NR & $1.13\cdot 10^{41}$ & $877$ & $582$ \\
sat\_tiny\_rewards & $0.49$ & NR & NR & $0.47$ & \textbf{0.43} & $4.00\cdot 10^{0}$ & $3$ & $1$ \\
\bottomrule
\end{tabular}
\caption{\Cd{} experiments: SAT models (continued on next page).}
    \label{tab:Q4:sat}
\end{table*}

\begin{table*}
\centering
\footnotesize
\begin{tabular}{lrrrrrrrr}
\toprule
Benchmark & \texttt{AR} & \texttt{CEGIS} & \texttt{Hybrid} & \textbf{SMPMC} & \texttt{SMT(LRA)} & \#Assignments & \#States & \#SMPMC Iter. \\
\midrule
unsat\_box-pushing-1fsc & \textbf{2.93} & NR & NR & $3.04$ & NR & $1.05\cdot 10^{6}$ & $5220$ & $178$ \\
unsat\_box-pushing-2fsc & \textbf{48.35} & NR & NR & $1131.48$ & NR & $1.15\cdot 10^{18}$ & $20880$ & $62423$ \\
unsat\_dpm & \textbf{45.33} & NR & NR & $53.04$ & NR & $4.30\cdot 10^{7}$ & $1652566$ & $1$ \\
unsat\_grid & $2.22$ & \textbf{0.85} & $2.07$ & $2.20$ & $1116.51$ & $6.55\cdot 10^{4}$ & $289$ & $2572$ \\
unsat\_grid-10-sl-4fsc & NR & $21.49$ & NR & \textbf{4.43} & NR & $6.55\cdot 10^{4}$ & $2405$ & $1005$ \\
unsat\_grid-meet-sl-2fsc & $100.21$ & NR & $112.15$ & \textbf{31.26} & NR & $1.68\cdot 10^{7}$ & $9216$ & $1260$ \\
unsat\_grid3x3-1fsc & \textbf{0.96} & NR & NR & $1.02$ & NR & $3.81\cdot 10^{12}$ & $6075$ & $18$ \\
unsat\_grid3x3-2fsc & NR & NR & NR & \textbf{853.64} & NR & $1.00\cdot 10^{36}$ & $24300$ & $9392$ \\
unsat\_herman & $82.68$ & NR & \textbf{58.64} & $159.13$ & NR & $3.12\cdot 10^{6}$ & $18753$ & $3326$ \\
unsat\_maze & NR & NR & NR & \textbf{1.38} & NR & $9.44\cdot 10^{6}$ & $183$ & $780$ \\
unsat\_maze-sl-5 & NR & NR & NR & NR & NR & $7.95\cdot 10^{25}$ & $424$ & NR \\
unsat\_mem\_1\_4x3-95 & \textbf{0.49} & NR & NR & $1.70$ & NR & $1.64\cdot 10^{4}$ & $22$ & $1459$ \\
unsat\_mem\_1\_4x5x2-95 & \textbf{0.52} & NR & NR & $0.62$ & $36.58$ & $1.02\cdot 10^{3}$ & $79$ & $58$ \\
unsat\_mem\_1\_drone-4-1 & $2.22$ & NR & \textbf{1.87} & NR & NR & $6.16\cdot 10^{113}$ & $1226$ & NR \\
unsat\_mem\_1\_drone-4-2 & \textbf{0.70} & NR & $0.85$ & NR & NR & $5.92\cdot 10^{225}$ & $1226$ & NR \\
unsat\_mem\_1\_drone-8-2 & NR & NR & NR & NR & NR & $3.02\cdot 10^{958}$ & $13042$ & NR \\
unsat\_mem\_1\_grid-avoid-4-0.1 & $0.48$ & $0.53$ & $0.52$ & \textbf{0.43} & $0.57$ & $4.00\cdot 10^{0}$ & $17$ & $5$ \\
unsat\_mem\_1\_hallway & NR & NR & NR & NR & NR & $3.81\cdot 10^{12}$ & $1500$ & NR \\
unsat\_mem\_1\_maze-alex & NR & $8.75$ & NR & \textbf{1.32} & $848.29$ & $1.64\cdot 10^{4}$ & $15$ & $641$ \\
unsat\_mem\_1\_milos-aaai97 & \textbf{2.31} & NR & NR & $9.79$ & NR & $1.01\cdot 10^{7}$ & $165$ & $7929$ \\
unsat\_mem\_1\_network & \textbf{0.39} & NR & NR & $0.59$ & $0.74$ & $6.40\cdot 10^{1}$ & $19$ & $41$ \\
unsat\_mem\_1\_nrp-8 & $0.51$ & \textbf{0.38} & $0.55$ & $0.62$ & $0.66$ & $2.56\cdot 10^{2}$ & $125$ & $23$ \\
unsat\_mem\_1\_query-s2 & \textbf{0.47} & NR & NR & $0.59$ & $866.47$ & $1.60\cdot 10^{1}$ & $36$ & $22$ \\
unsat\_mem\_1\_query-s3 & \textbf{1.21} & NR & NR & $1.51$ & NR & $8.10\cdot 10^{1}$ & $108$ & $114$ \\
unsat\_mem\_1\_refuel-06 & $1.08$ & NR & \textbf{1.02} & $3.62$ & $385.94$ & $9.03\cdot 10^{13}$ & $208$ & $3824$ \\
unsat\_mem\_1\_refuel-20 & NR & NR & NR & NR & NR & $2.99\cdot 10^{56}$ & $6834$ & NR \\
unsat\_mem\_1\_rocks-12 & \textbf{337.59} & NR & $343.48$ & NR & NR & $4.12\cdot 10^{1076}$ & $6553$ & NR \\
unsat\_mem\_2\_4x3-95 & \textbf{0.69} & NR & NR & $16.16$ & NR & $2.15\cdot 10^{9}$ & $38$ & $18026$ \\
unsat\_mem\_2\_4x5x2-95 & \textbf{0.85} & NR & NR & $7.92$ & $149.86$ & $4.19\cdot 10^{6}$ & $153$ & $6270$ \\
unsat\_mem\_2\_grid-avoid-4-0.1 & $0.50$ & $0.47$ & $0.50$ & \textbf{0.44} & $0.90$ & $1.28\cdot 10^{2}$ & $31$ & $78$ \\
unsat\_mem\_2\_maze-alex & NR & NR & NR & \textbf{5.52} & NR & $1.07\cdot 10^{9}$ & $25$ & $6230$ \\
unsat\_mem\_2\_milos-aaai97 & \textbf{1.24} & NR & NR & $14.12$ & NR & $5.32\cdot 10^{19}$ & $328$ & $4581$ \\
unsat\_mem\_2\_network & \textbf{0.59} & NR & NR & $1.46$ & $1043.53$ & $5.24\cdot 10^{5}$ & $36$ & $878$ \\
unsat\_mem\_2\_nrp-8 & NR & NR & NR & $406.09$ & \textbf{108.38} & $2.88\cdot 10^{17}$ & $230$ & $365661$ \\
unsat\_mem\_2\_query-s2 & \textbf{0.77} & NR & NR & $1.81$ & NR & $1.31\cdot 10^{5}$ & $70$ & $325$ \\
unsat\_mem\_2\_query-s3 & \textbf{93.94} & NR & NR & $234.35$ & NR & $3.36\cdot 10^{6}$ & $214$ & $11373$ \\
unsat\_mem\_2\_refuel-06 & NR & NR & NR & NR & NR & $4.08\cdot 10^{42}$ & $394$ & NR \\
unsat\_mem\_3\_4x3-95 & \textbf{18.83} & NR & NR & $709.45$ & NR & $3.21\cdot 10^{14}$ & $54$ & $575661$ \\
unsat\_mem\_3\_4x5x2-95 & \textbf{2.68} & NR & NR & $288.07$ & NR & $1.55\cdot 10^{10}$ & $227$ & $195453$ \\
unsat\_mem\_3\_grid-avoid-4-0.1 & \textbf{0.50} & $0.95$ & $0.73$ & $1.30$ & $17.90$ & $5.18\cdot 10^{3}$ & $45$ & $903$ \\
unsat\_mem\_3\_maze-alex & \textbf{0.39} & NR & $0.49$ & $0.58$ & NR & $1.07\cdot 10^{14}$ & $35$ & $1$ \\
unsat\_mem\_3\_network & \textbf{100.98} & NR & NR & $299.90$ & NR & $1.55\cdot 10^{10}$ & $53$ & $149876$ \\
unsat\_mem\_3\_query-s2 & \textbf{28.99} & NR & NR & $193.16$ & NR & $6.53\cdot 10^{9}$ & $104$ & $41108$ \\
unsat\_mem\_3\_query-s3 & \textbf{51.50} & NR & NR & $218.15$ & NR & $8.47\cdot 10^{11}$ & $320$ & $2724$ \\
unsat\_mem\_3\_refuel-06 & NR & NR & NR & NR & NR & $5.67\cdot 10^{76}$ & $580$ & NR \\
unsat\_mem\_4\_4x3-95 & \textbf{623.27} & NR & NR & NR & NR & $7.38\cdot 10^{19}$ & $70$ & NR \\
unsat\_mem\_4\_grid-avoid-4-0.1 & $4.94$ & \textbf{1.11} & $5.13$ & $7.83$ & $1468.21$ & $2.62\cdot 10^{5}$ & $59$ & $7413$ \\
unsat\_pole & \textbf{1.08} & NR & NR & $3.77$ & NR & $1.13\cdot 10^{15}$ & $9241$ & $381$ \\
unsat\_pole-res & \textbf{0.71} & NR & NR & $1.03$ & NR & $1.33\cdot 10^{6}$ & $6745$ & $43$ \\
unsat\_recycling-2fsc & \textbf{0.56} & NR & NR & $0.85$ & NR & $1.68\cdot 10^{6}$ & $344$ & $253$ \\
unsat\_recycling-3fsc & \textbf{2.51} & NR & NR & $3.17$ & NR & $2.82\cdot 10^{11}$ & $774$ & $1485$ \\
unsat\_refuel-06-res & $0.65$ & NR & \textbf{0.48} & $0.80$ & NR & $1.13\cdot 10^{41}$ & $877$ & $33$ \\
unsat\_tiny\_rewards & \textbf{0.46} & NR & NR & $0.54$ & $0.49$ & $4.00\cdot 10^{0}$ & $3$ & $2$ \\\bottomrule
\end{tabular}
\caption{\Cd{} experiments: UNSAT models.}
    \label{tab:Q4:unsat}
\end{table*}

\endgroup

\end{document}